\newcommand{\dif}{\dd} 
\newcommand{\R}{\mathbb{R}}
\newcommand{\N}{\mathbb{N}}
\newcommand{\rem}[1]{}
\newcommand{\der}[2][]{\frac{\dd {#1}}{\dd {#2}}}
\newcommand{\pder}[2][]{\frac{\partial {#1}}{\partial {#2}}}
\DeclareMathOperator{\ad}{\mathrm{ad}}
\newtheorem{remark}{Remark}
\newtheorem{theorem}{Theorem}
\newtheorem{lemma}{Lemma}
\newtheorem{definition}{Definition}
\title{\textsc{Extremal Black Holes as \\ Relativistic Systems with Kepler Dynamics}}
\author[1,2]{Dijs de Neeling\footnote{d.w.de.neeling@rug.nl}}
\author[1]{Diederik Roest\footnote{d.roest@rug.nl}}
\author[2]{Marcello Seri\footnote{m.seri@rug.nl}}
\author[2]{Holger Waalkens\footnote{h.waalkens@rug.nl}}
\affil[1]{Van Swinderen Institute for Particle Physics and Gravity, \authorcr\small University of Groningen, The Netherlands}
\affil[2]{Bernoulli Institute for Mathematics, Computer Science and Artificial Intelligence, \authorcr\small University of Groningen, The Netherlands}
\begin{document}
\maketitle

 \begin{abstract} \noindent
The recent detection of gravitational waves emanating from inspiralling black hole binaries has triggered a renewed interest in the dynamics of relativistic two-body systems. The conservative part of the latter are given by Hamiltonian systems obtained from so called post-Newtonian expansions of the general relativistic description of black hole binaries. In this paper we study the general question of whether there exist relativistic binaries that display Kepler-like dynamics with elliptical orbits. We show that an orbital equivalence to the Kepler problem indeed exists for relativistic systems with a Hamiltonian of a Kepler-like form. This form is realised by extremal black holes with electric charge and scalar hair to at least first order in the post-Newtonian expansion for arbitrary mass ratios and to all orders in the post-Newtonian expansion in the test-mass limit of the binary. Moreover, to fifth post-Newtonian order, we show that Hamiltonians of the Kepler-like form can be related explicitly through a canonical transformation and time reparametrization to the Kepler problem, and that all Hamiltonians conserving a Laplace-Runge-Lenz-like vector are related in this way to Kepler.\medskip\\
{\footnotesize\textbf{Keywords}: Einstein-Maxwell-dilaton, extremal black holes, integrable systems, Kepler problem, orbital equivalence}\\
{\footnotesize\textbf{MSC classes}: 37J06, 70H15, 83C22, 83C57} %see: https://zbmath.org/classification/
  \end{abstract}

%\tableofcontents

\section{Introduction}

From 2015 onward, the observatories LIGO, VIRGO and KAGRA have detected many instances of gravitational waves originating from binaries of neutron stars or black holes~\cite{LIGOScientific, Abbott_2016}. With more observing runs~\cite{O4LIGO} and the space based telescope LISA upcoming, the dawn of the gravitational wave era provides a strong motivation for precision study of binary dynamics, particularly of the earlier stage of the merger~\cite{2017arXiv170200786A,Amaro-Seoane:2022rxf}.

The early inspiral stage is usually approached with analytical tools~\cite{Isoyama:2020lls}, such as the post-Newtonian (PN) expansion. Though there is a good understanding of the PN expansion and other approximations in the context of binary systems, both of the conservative and radiation part to high order, future experiments that are very sensitive to a long inspiral phase stimulate the further development of analytical tools to limit the demand on computational resources.  

This paper is therefore dedicated to the identification of specific relativistic systems for which the PN expansion results in a system of a much more manageable form. Often, systems with this kind of simplifications possess more symmetries and conservation laws, reducing the number of effective degrees of freedom. This is famously the case in the classical analogue of the relativistic binary systems; we will investigate to what extent the same holds for certain relativistic systems.

\bigskip

On the non-relativistic i.e.~classical level, the two-body problem divides up nicely into the motion of a free particle (the total mass located at the center of mass) and the motion of a particle with reduced mass $\mu=\frac{m_1m_2}{m_1+m_2}$ in the stationary potential generated by the total mass $M=m_1+m_2$. The solutions to this problem then are the same ellipses as in the classical Kepler problem in celestial mechanics. The latter possesses, next to the expected spherical symmetry $SO(3)$ yielding the conservation of angular momentum, an additional symmetry which gives the conservation of the Laplace-Runge-Lenz (LRL) vector. Since this symmetry is not immediately obvious on the level of the Lagrangian it is often referred to as a hidden symmetry. 

The three components of the angular momentum vector, the three components of the LRL vector and the total energy form seven conserved scalar quantities. As the length of the LRL vector is determined by angular momentum and energy, and the angular momentum vector is perpendicular to the LRL vector, only five of the scalar conserved quantities are independent. The joint levelsets of these five constants of motion in the six-dimensional phase space are hence one-dimensional. As a consequence, bounded orbits must be periodic and take the form of the famous elliptical orbits found in Kepler's model of the Solar System (while in General Relativity of course, the symmetry is broken and the perihelion -- the point of closest approach -- precesses).

For central force systems, there is a strong link between closing orbits and enhanced symmetry, in the form of Bertrand's theorem. This states that the only two central forces whose bounded orbits are all closed curves are the Kepler potential and the isotropic harmonic oscillator~\cite{Goldstein2001-ql}, which are in fact related (see e.g.~\cite{VANDERMEER2015181}) and known for their large symmetry groups, $SO(4)$ and $SU(3)$, respectively. 

Since we know symmetries make problems more tractable and the non-relativistic problem possesses additional symmetry, it is natural to attempt to restore the non-relativistic symmetry in relativistic systems. While the closing of bounded orbits is not a sufficient condition for conservation of a LRL vector, it is a necessary condition that is satisfied very rarely by relativistic theories. The closure of orbits can therefore be a useful tool for diagnosis of theories when looking for additional spacetime symmetries, as demonstrated by e.g.~\cite{Caron-Huot:2018ape}. 

There has been previous work done on identifying relativistic systems that have exclusively closed bounded orbits. For example, Perlick has identified all spacetimes in General Relativity with that property, a sort of relativistic Bertrand theorem~\cite{Perlick_1992}. He considered all spherically symmetric spacetimes that have bounded timelike geodesics with a perihelion  shift equal to $\frac{\pi}{\beta}$, with $\beta$ rational. The cases $\beta=1$ and $\beta=2$, corresponding to relativistic versions of the Kepler problem and harmonic oscillator, are the only ones admitting an additional symmetry. However, Perlick's theorem is only taking into account gravity, without allowing other forces to be present. Additionally, there is the hydrogen-like system in $\mathcal{N}=4$ super Yang-Mills theory, which has an additional conserved vector, coinciding with the classical LRL vector in the non-relativistic limit~\cite{Caron-Huot:2014gia,Alvarez-Jimenez:2018lff}. Interesting follow-up results were derived in $\mathcal{N}=8$ supergravity, where the two-body problem was shown to have a LRL vector to at least order $1$PN and a vanishing perihelion shift to third order in the post-Minkowskian (PM) expansion\footnote{The post-Newtonian expansion is in terms of $\frac{1}{c^2}$, resulting in an expansion in weak gravitational field and low velocity, while the post-Minkowskian is an expansion in gravitational constant $G$, i.e. a weak gravitational field expansion only~\cite{Damour:2016gwp}.}~\cite{Caron-Huot:2018ape,Parra-Martinez:2020dzs}. However, at $3$PM there appears to be a hint that the quantum energy level degeneracy linked to the LRL vector and present at $1$PN might be lost. This suggests an interesting break in the bond between closed bounded orbits and hidden symmetry, which is present classically. Additionally, it was shown in~\cite{Caron-Huot:2018ape} that the test-mass limit in $\mathcal{N}=8$ supergravity has a zero perihelion shift to all orders in velocity.

\bigskip

Although relativistic corrections of the Kepler problem generically break the symmetry associated with the LRL vector, it follows from the above that specific systems manage to preserve it in some sense. These systems then, one might wonder, are perhaps not truly relativistic in some sense, as their dynamics is still constrained by the same symmetries, giving rise to strictly periodic orbits in phase space (at least for bounded orbits). 

We will study a class of such systems and demonstrate that they are orbital equivalent to the Kepler system on a levelset of the Hamiltonian in phase space. Their full Hamiltonians are implicitly defined by
\begin{align} \label{eq: IH}
  f(H(q,p)) = \frac{p^2}{2} -  \frac{g(H(q,p))}{r(q)} \,,
\end{align}
for smooth functions $f,g:\R \to \R$. As we will see, such systems give rise to a phase space which can be thought of as being foliated by energy surfaces of Kepler problems where for each value of $H$ the motion is proportional to that of a Kepler problem with a different coefficient for the gravitational potential; in other words, with a different gravitational constant. The global structure of the phase space is therefore tied to the specific properties of the function $g(H)$.

We will show that the above class of Hamiltonians~\eqref{eq: IH} naturally arises in Einstein-Maxwell-dilaton (EMD) theory, when one considers two extremal black holes with opposite charge for a specific value of the dilaton coupling (cf. equation \eqref{eq: specEMD2}%\footnote{Note that the EMD theory with extremal charges and dilaton coupling $a=\sqrt{3}$ is not suggested to describe nature;  it is merely a theoretically interesting possibility.}
). For this case, we derive a functional relation of the form \eqref{eq: IH}  to first order in the post-Newtonian expansion of the two-body system and all orders in the test-mass limit. 

The case of extremal black holes is somewhat special and unlikely to be realized in nature (with all observed black holes approximately neutral), and the dilaton coupling of $a=\sqrt{3}$ (see Section~\ref{Two-body}) is much larger than current experimental bounds~\cite{Casini}. However, as a physical aside, it is an interesting question how one would observationally distinguish the above Hamiltonians from the Kepler one, e.g.~in the solar system. When studying planets orbiting a (much heavier) Sun described by $H(q,p)$ as opposed to the ordinary Kepler problem, the first two laws of Kepler still hold: the bounded orbits are ellipses and the trajectories conserve angular momentum. However, the period of an orbit becomes
\begin{equation}
    T=2\pi\sqrt{\frac{s^3}{GMg(E)}},
\end{equation}
with $s$ the semi-major axis of the ellipse, where for the sake of clarity we included the mass of the sun $M$ and the gravitational constant $G$. This differs from the usual Newtonian period $T_N=2\pi\sqrt{{s^3} / {(GM)}}$. Therefore the third law\footnote{For all bodies orbiting the Sun, the square of the period is proportional to the third power of the semi-major axis of the orbit, {\it with the same proportionality constant}~\cite{Kepler}.} is violated: different orbits will have different energies, causing the ratio $T^2/s^3$ to no longer be the same for all orbiting objects. 

To what extent, then, are these Hamiltonians equivalent to the Kepler problem? We will prove they describe the same dynamics at least on the energy shell, so for a fixed $H=E$, in the sense that their flows are proportional. Moreover, there can exist a transformation mapping $H$ to the Kepler Hamiltonian where we do not need to restrict to the energy surface (at least locally, in a small neighbourhood of the energy surface). This transformation is shown to exist as an energy redefinition and canonical transformation at least up to and including 5PN. 
It is worth noting that our local constructions show the existence of Kepler dynamics and the conserved LRL vector, but do not necessarily imply global $SO(4)$ symmetry. Proving the transformations generated by the conserved quantities form a group of canonical transformations isomorphic to $SO(4)$ is not trivial~\cite{Bacry}. Our construction only shows the local existence of the so(4) Lie algebra, generated by the angular momentum and the appropriately rescaled LRL vector. Whether the approximate transformations to Kepler can be extended to all orders, whether they exists globally and whether the symmetry group is indeed $SO(4)$, remain topics of future research.

Another question is whether relativistic systems canonically conjugate to Kepler up to time reparametrization are \emph{the only} ones with a conserved LRL-type vector and the corresponding symmetry. We show this is the case at least to 5PN order. 

\bigskip

This paper is organised as follows. In Section~\ref{Kepler-type hams}, we discuss the set of Kepler-like Hamiltonians and their relation to the Kepler problem. Here we show the on-shell equivalence to Kepler in Subsection~\ref{on-shell}, a construction that yields an explicit off-shell transformation to the Kepler problem in Subsection~\ref{off-shell} and the equivalence of all LRL-preserving Hamiltonians of a certain kind to Kepler in Subsection~\ref{HidSym}. After a physical intermezzo in Section~\ref{EMD sec} introducing Einstein-Maxwell-dilaton black holes, we show in Section~\ref{Two-body} that a particular tuning of the parameters in this theory allows one to write the 1PN two body and all-order test-mass limit as a Kepler-type Hamiltonian, providing an interesting example of relativistic systems with classical dynamics.

\bigskip
\textbf{Notational conventions}. In what follows, unless differently specified, we will use the following notational conventions. 

We will use $(q,p)$ to denote canonical coordinates in $T^\star\R^3\cong \R^3\times\R^3$. The radial momentum is denoted $p_r$, i.e. $ p_r= \frac{(p\cdot q) }{r}$  where $r = r(q) = \abs{q}$. We will use upper indices to denote vector components, therefore, $V^i$ denotes the $i$th component of a vector $V$. Indices for relativistic objects are denoted by $\mu$ and $\nu$. Throughout, we will assume Einstein notation and omit explicit sums. For instance, 4-vectors are denoted by $x^\mu$, the Lorentzian metric is denoted $g_{\mu\nu}$ and, therefore, the inner product of tangent vectors with respect to $g$ is given by $g_{\mu\nu}\dot{x}^\mu\dot{y}^\nu$. For the metric we assume the signature $(- + + + )$.

For convenience, throughout the paper we use units such that the speed of light, $c$, and the gravitational constant, $G$, are equal to 1.

%%-----------------------------------------------------------

\section{Relativistic Systems with Kepler Dynamics} \label{Kepler-type hams}
In this section we will discuss Hamiltonians of type~\eqref{eq: IH} central to this work. Although they appear naturally from relativistic problems, see Section~\ref{Two-body}, they end up being equivalent (in the ways mentioned in the introduction) to the classical Kepler system. We will first review `relativistic' corrections within the post-Newtonian (PN) expansion that we will employ. Subsequently we will prove the equivalence on the energy surface (on-shell) of the Kepler-like Hamiltonians to Kepler problems and, later, how these can be explicitly related (up to fifth PN order) to the Kepler problem through canonical transformations and a non-linear energy redefinition (off-shell). Lastly, we show that all Hamiltonians preserving a LRL-like vector are related to Kepler in this way, also up to 5PN.

%%-----------------------------------------------------------

\subsection{The post-Newtonian expansion}\label{sec:PN}
The post-Newtonian expansion involves a power-series expansion in the small parameter $\frac{1}{c^2}$, which physically amounts to an asymptotic expansion both in slow motion and weak field. Thus the $0$PN order accounts for the non-relativistic terms and the highest-order in an $n$PN expansion comprises terms proportional to $\left(\frac{1}{c^2}\right)^n$. This ordering has been in use for binary systems for a long time, going back to Einstein in calculating the anomalous precession of Mercury~\cite{Einstein}. 

A general two-body Hamiltonian has translational and rotation symmetry. The reduction of the translational symmetry can be accomplished by choosing a center-of-mass frame. In a center-of-mass frame a general two-body Hamiltonian can then be written solely in terms of the $SO(3)$ invariants $p^2$, $q^2$ and $(p\cdot q)$. Knowing the PN orders of these terms individually would then give us a way to count the orders of all terms in a general two-body Hamiltonian. We can infer these orders by inspection of some simple cases. A relativistic free particle, for example, has Hamiltonian
\begin{align}
  H_{\text{fp}}=mc^2\sqrt{1+\frac{p^2}{m^2c^2}}
  =m c^2 \left( 1 + \frac{p^2}{2m^2 c^2} - \frac{p^4}{8m^4c^4} + \mathcal{O} \big( \frac{p^6}{m^6c^6}  \big) \right)\,.
\end{align}
Note that each momentum appears only in the dimensionless combination $\frac{p^2}{m^2c^2}$, and that the leading term in this expansion gives the rest-mass energy $mc^2$. When including the gravitational pull of a large mass $M$ on a test body $m$, similar considerations show that the radius $r=\abs{q}$ only appears in the combination $\frac{2GM}{r c^2}$. The third type of term that can be present in relativistic systems with spherical symmetry is the inner product $(p\cdot q)$, appearing only in the radial momentum $p_r=\frac{(p\cdot q)}{r}$, which carries a $\frac{1}{c}$ to be consistent with the total momentum. These considerations give us an order counting system for terms occurring in a relativistic Hamiltonian. Such a Hamiltonian is given by
\begin{equation}\label{eq: PN-counting}
    H_{\rm rel}=\frac{1}{\epsilon}\left(1+ \sum_{j=0}^\infty \epsilon^{j+1}\Lambda_j(\alpha)\right),
    \qquad\Lambda_j(\alpha)= \sum_{\substack{(l,m,n) \in\mathbb{N}^{3} \\  l+m+n=j+1}} \alpha_{l,m,n}\frac{(p^2)^l(p_r^2)^n}{r^m}\,,
\end{equation}
where we introduced $\epsilon=\frac{1}{c^2}$ as a bookkeeping parameter to easily keep track of the PN orders. We set $m=1$. Because of the rest-mass term, clearly the PN orders of $\Lambda_j(\alpha)$ will be shifted down by one. As the constant mass term does not influence the dynamics, we will drop it from now on in our mathematical analysis, but remember the effect the overall factor  $\frac{1}{\epsilon}$ in \eqref{eq: PN-counting} when we discuss PN orders. We note that we will not discuss any systems where spin plays a role.

To mathematically make sense of this, we will give a formal definition of post-Newtonian expansion of functions on phase space.
\begin{definition} \label{def: PN}
A Hamiltonian function $B(\epsilon; q,p)$ depending on a small parameter $\epsilon > 0$ is in \textbf{post-Newtonian expansion to $N$th order} if it is of the form 
\begin{equation}\label{eq: PNdef}
    B(\epsilon; q,p)=\sum_{j=0}^N \epsilon^{j}B_j(q,p) + \mathcal{O}(\epsilon^{N+1})\,,
\end{equation} 
for some regular enough Hamiltonian functions $B_j$.
\end{definition}

The term \emph{post-Newtonian} comes from the fact that the expansions is carried out in the context of General Relativity where $\epsilon=1/c^2$
and the speed of light $c$ is considered large with $\epsilon=0$ yielding the classical limit of General Relativity. 

%%-----------------------------------------------------------

\subsection{On-shell equivalence to Kepler dynamics} \label{on-shell}
Let us consider the following family of Hamiltonians $H = H_{f,g}: T^*\R_0^d\simeq (\R^d-\{0\})\times\R^d \to \R$, implicitly defined by the functional relation
\begin{align}\label{eq: implicit}
  f(H(q,p)) = \frac{p^2}{2} -  \frac{g(H(q,p))}{r(q)} \,,
\end{align}
with $f,g:\R\to\R$ smooth functions, that can be written as powers series in the form  
\begin{equation} \label{eq: pow ser}
    f(x) =x+f_1 x^2+ f_2 x^3 +\dots\,, \qquad 
    g(x) = 1+g_1 x+g_2 x^2+\dots\,,  
\end{equation}
where $f_i, g_i$ are real numbers. In other words, we assume that their Taylor-Maclaurin expansions have their first coefficients fixed by $f_{-1}=0$, $g_0=1$ and $f_0=1$ (with labels related to PN orders as will become more clear below). The reason the constant term in $f(H(q,p))$ with coefficient $f_{-1}$ is absent, is that we want to disregard rest-mass terms and~\eqref{eq: implicit} to yield the Kepler system at lowest order in the PN expansion below.

As discussed in the introduction, this is directly motivated by the Hamiltonian of a binary Einstein-Maxwell-dilaton system in the test-mass limit. In Section~\ref{Two-body}, we will see this system has a Hamiltonian that is implicitly defined (at all PN orders) by the above relation with 
\begin{align}
 f(x) = x + \tfrac{1}{2}x^2, \qquad g(x) = 1+x+\tfrac{1}{4}x^2\,.
\end{align}
Note that we set the test mass to unity in the identifications, to exactly match the description of the above Hamiltonian family. 

Since we already know explicit Hamiltonian functions solving~\eqref{eq: implicit}, we will not pursue the question of sufficient conditions for existence\footnote{This is a rather interesting technical problem on its own, and we refer the interested reader to~\cite{cristea2017,BergerAller2021} for the current state of the art.}. For the time being, we assume that a solution $H(q,p)$ exists for the given functions $f, g$ and describe some of its properties in relation with the Kepler Hamiltonian. Therefore, let $H: T^*\R_0^d \to \R$ be a smooth Hamiltonian function satisfying the relation~\eqref{eq: implicit}. For convenience, we define the new Hamiltonian function
\begin{equation}\label{eq: K}
    K(q,p):=f(H(q,p))=\frac{p^2}{2}-\frac{g(H(q,p))}{r(q)}\,.
\end{equation}
Since $K$ is by definition a function of $H$, it is also an integral of motion of $H$, that is, $K$ is constant on the flow of $H$. If $E\in\R$ is a regular value of $H$, this implies that on the energy levels $H^{-1}(E)$
\begin{equation}
  K|_{H^{-1}(E)}(q,p) = \frac{p^2}{2}-\frac{g(E)}{r(q)}\,,
\end{equation}
which is a Kepler-type Hamiltonian with gravitational constant $g(E)$. In fact, the flow generated by $K$ on all its regular energy surfaces turns out to be proportional to the flow of a Kepler Hamiltonian. 

\begin{theorem}\label{theorem:FunctionalRelation}
Consider $M=T^*\R_0^d \simeq (\R^d-\{0\})\times\R^d$ with standard symplectic form $\omega=\sum_k \dif p_k \wedge \dif q_k$. Assume that there are functions $f,g\in C^2(\R)$ such that the Hamiltonian $H:M\to \R$ is defined by \eqref{eq: implicit} and 
\[
K(q,p) := f(H(q,p))\,.
\]
Then, for any regular energy value $H=E$ the vector fields $X_K|_{H^{-1}(E)}$ and $X_H|_{H^{-1}(E)}$ of $K$ and $H$ are proportional.
Moreover, if
\begin{equation}\label{def:regular levelset}
  \mathcal{E} := \left\{
    E\in\R \mid f'(E)\neq 0 \mbox{ and }E\mbox{ is regular value of } H
  \right\}
\end{equation}
and
\[
  J: M\times\mathcal{E}\to\R, \qquad
  J(q,p,E):= J_E(q,p) := \frac{p^2}{2} - \frac{g(E)}{r(q)}\,,
\]
then for all $E \in \mathcal{E}$, the Hamiltonian vector fields $X_{J_E}|_{H^{-1}(E)}$ and $X_K|_{H^{-1}(E)}$ are proportional.
\end{theorem}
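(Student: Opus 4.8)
The plan is to reduce everything to one elementary fact: the Hamiltonian vector field of a composition $\phi\circ H$ is $(\phi'\circ H)$ times that of $H$. Indeed, $\dif(\phi\circ H)=(\phi'\circ H)\,\dif H$, and since $\omega$ is nondegenerate the assignment $v\mapsto\iota_v\omega$ is a fibrewise linear isomorphism, so this identity of $1$-forms passes to the identity of vector fields $X_{\phi\circ H}=(\phi'\circ H)\,X_H$ on all of $M$. Applying this with $\phi=f$ gives $X_K=(f'\circ H)\,X_H$. On a regular level set $H^{-1}(E)$ the factor $f'\circ H$ equals the constant $f'(E)$, hence $X_K|_{H^{-1}(E)}=f'(E)\,X_H|_{H^{-1}(E)}$, which is the first assertion. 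The same computation shows $\{K,H\}=\dif K(X_H)=(f'\circ H)\,\dif H(X_H)=0$, so $X_K$ is tangent to $H^{-1}(E)$ and the restricted vector field is genuinely well defined.

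For the second assertion I would differentiate the two available expressions for $K$ and $J_E$ directly. From $K=\tfrac{p^{2}}{2}-g(H)/r$ and, for fixed $E$, $J_E=\tfrac{p^{2}}{2}-g(E)/r$, subtracting and using $\dif(p^2/2)=\sum_k p_k\,\dif p_k$ cancels the momentum term and leaves
\[
\dif K-\dif J_E=-\frac{g'(H)}{r}\,\dif H+\frac{g(H)-g(E)}{r^{2}}\,\dif r .
\]
Now evaluate this as an identity of $1$-forms at points of $H^{-1}(E)$: there $g(H)=g(E)$, so the $\dif r$ term vanishes and $g'(H)=g'(E)$, giving $\dif K-\dif J_E=-\frac{g'(E)}{r}\,\dif H$ at every point of $H^{-1}(E)$. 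Passing again through $\omega^{-1}$ yields $X_K-X_{J_E}=-\frac{g'(E)}{r}\,X_H$ along $H^{-1}(E)$; in particular $X_{J_E}|_{H^{-1}(E)}$ is a combination of $X_K$ and $X_H$, hence tangent to $H^{-1}(E)$, even though the unrestricted field $X_{J_E}$ need not preserve $H$. Combining with the first step, $X_{J_E}=\bigl(f'(E)+g'(E)/r\bigr)X_H$ on $H^{-1}(E)$, and since $E\in\mathcal{E}$ forces $f'(E)\neq0$ we may divide to get $X_{J_E}=\bigl(1+\tfrac{g'(E)}{r\,f'(E)}\bigr)X_K$ on $H^{-1}(E)$, the claimed proportionality.

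The computation is short, so there is no real obstacle; the only things that need care are (i) keeping the distinction between \emph{restricting} a $1$-form to $H^{-1}(E)$ as a pullback — which would annihilate $\dif H$ — and \emph{evaluating} its coefficient functions at points of $H^{-1}(E)$, it being the latter that is needed to pass to Hamiltonian vector fields; and (ii) the tangency bookkeeping that makes ``$X_{J_E}|_{H^{-1}(E)}$'' meaningful, which follows from the linear relations above together with $\{K,H\}=0$. It is also worth recording that, because $E$ is a regular value, $X_H$ is nowhere zero on $H^{-1}(E)$, so all the proportionality ``ratios'' are honest smooth functions on the level set; the ratio $1+g'(E)/(r\,f'(E))$ is a nonconstant function of $r$ in general, so $X_{J_E}$ and $X_K$ share the same orbits only after a position-dependent time reparametrisation — exactly the effect responsible for the violation of Kepler's third law noted in the introduction.
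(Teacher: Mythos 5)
Your proof is correct and follows essentially the same route as the paper: the first part is the chain-rule identity $X_K=(f'\circ H)X_H$, and the second part is the same decomposition of $X_K-X_{J_E}$ into a multiple of $X_H$ on the level set (your factor $1+\tfrac{g'(E)}{r\,f'(E)}$ is exactly the paper's $1-\pder[J]{E}/f'(H)$ since $\pder[J]{E}=-g'(E)/r$), merely phrased via $1$-forms and $\omega^{-1}$ instead of componentwise partial derivatives. The tangency and restriction-versus-evaluation remarks are welcome additions but do not change the argument.
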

\begin{proof}
For the Hamiltonian vector fields $X_K$ and $X_H$ of $K$ and $H$, respectively, we have
\begin{equation}
\label{eq:proportional}
X_K = -\pder[K]{q} \pder{p} +\pder[K]{p} \pder{q} = f'(H) \left(-\pder[H]{q} \pder{p}+\pder[H]{p}\pder{q}\right) = f'(H)  X_H\,.
\end{equation}
The vector fields are hence proportional.

For the second part, let $E\in\mathcal{E}$. The Hamiltonian vector field of $J_E$ at a point $(q,p)$ is
\[
  X_{J_E}(q,p) = - \pder[J(q,p,E)]{q} \pder{p} + \pder[J(q,p,E)]{p} \pder{q}\,.
\]
Using $K(H(q,p))=J(q,p,H(q,p))$ we have
\[
\begin{split}
  X_K &=  -\left( \pder[J]{q} +  \pder[J]{H} \pder[H]{q} \right)\pder{p} + \left( \pder[J]{p}+ \pder[J]{H} \pder[H]{p}\right) \pder{q}  \\ 
     &= \left(- \pder[J]{q} \pder{p} +  \pder[J]{p} \pder{q} \right) +  \pder[J]{H} \left( - \pder[H]{q} \pder{p}  + \pder[H]{p} \pder{q} \right)\,.
\end{split}
\]
Therefore
\[
  X_K|_{H^{-1}(E)} = X_{J_E}|_{H^{-1}(E)} + \frac{ \pder[J]{E}|_{H^{-1}(E)} }{f'(E)} X_K|_{H^{-1}(E)} \,,
\]
where we used~\eqref{eq:proportional} for the second term. 
%Rearranging the previous equation, 
Solving for $X_{J_E}$
we get that on $H^{-1}(E)$ 
\begin{equation}\label{eq:X_K-X_J_E-proportionality}
  \left(1- \frac{ \pder[J]{E}}{f'(H)}  \right) X_K = X_{J_E}\,.
\end{equation}
\end{proof}
\noindent
This means that the evolution of Hamiltonians satisfying~\eqref{eq: implicit} is equivalent to the evolution of a classical Kepler problem - more precisely, for each energy, the trajectories are equivalent to that of a Kepler problem with a specific energy-dependent value of the coupling with the potential up to possibly a time-rescaling. In particular, all bounded orbits are ellipses in the configuration space. This relation is somewhat reminiscent of the Maupertuis-Jacobi transformation, in which the trajectories of a natural Hamiltonian are described via a time reparametrization as geodesics of a metric~\cite{Tsiganov_2001,Chanda:2016aph}.

The fact that the above Hamiltonians are equivalent to the Kepler problem and, in particular, the fact that all trajectories are closed, hints at the existence of an associated conserved Laplace-Runge-Lenz (LRL) vector on the energy levels. An obvious candidate would be the vector 
\begin{equation}\label{eq: GenLRL}
  A^i({q},{p})=({p}\cross L)^i({q},{p})-g(H({q},{p}))\frac{{q^i}}{r({q})} \,,
\end{equation}
since it is simply the classical LRL vector, with an additional coefficient corresponding to the coefficient of the potential energy in~\eqref{eq: K}.
\begin{theorem}
Let $\mathcal{E}$ be defined by~\eqref{def:regular levelset} and $E \in \mathcal{E}$.
On the set
\[
    \left\{
    (q,p)\in H^{-1}(E) \;\big|\; f'(E)-\pder[J(q,p,E)]{E} \neq 0
    \right\},
\]
the Hamiltonian $H(q,p)$ defined in~\eqref{eq: implicit} is in involution with all components of the Laplace-Runge-Lenz vector $A^i(q,p)$ defined in~\eqref{eq: GenLRL}, and hence these are integrals of motion of the dynamics generated by $H$. 
\end{theorem}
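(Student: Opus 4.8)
The plan is to deduce the claim from the classical conservation of the Laplace-Runge-Lenz (LRL) vector of a genuine Kepler Hamiltonian, transported along the proportionality of Hamiltonian vector fields established in Theorem~\ref{theorem:FunctionalRelation}. Fix $E\in\mathcal{E}$, write $N:=H^{-1}(E)$ for the regular energy level, and work in $d=3$, so that the cross products below are defined. Let
\[
  B_E^i := (p\cross L)^i - g(E)\,\frac{q^i}{r}
\]
denote the ordinary LRL vector of the Kepler Hamiltonian $J_E = \tfrac{p^2}{2} - \tfrac{g(E)}{r}$ with coupling $g(E)$. The starting point is that on $N$ one has $H=E$, hence $g(H)=g(E)$, so that $A^i$ and $B_E^i$ coincide \emph{as functions on} $N$: $A^i|_N = B_E^i|_N$.

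First I would reduce $\{H,A^i\}|_N$ to $\{H,B_E^i\}|_N$. Since $X_H$ is tangent to $N$ (because $X_H(H)=\{H,H\}=0$), the value $\{H,\varphi\}(x)=X_H(\varphi)(x)$ at a point $x\in N$ depends only on the restriction $\varphi|_N$; applying this to $\varphi=A^i$ and to $\varphi=B_E^i$ gives $\{H,A^i\}|_N = \{H,B_E^i\}|_N$. (Equivalently, without invoking tangency: $A^i - B_E^i = \bigl(g(E)-g(H)\bigr)q^i/r = (H-E)\,\Phi^i$ for some $\Phi^i$ that is smooth away from the origin, so $\{H,A^i-B_E^i\}=(H-E)\{H,\Phi^i\}$ vanishes on $N$.) Hence it suffices to show that $\{H,B_E^i\}$ vanishes on the subset of $N$ in the statement.

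For this I would combine two inputs. The classical one is that $B_E^i$ Poisson-commutes with its own Kepler Hamiltonian, $\{J_E,B_E^i\}=0$ on all of $M$, i.e.\ $X_{J_E}(B_E^i)\equiv 0$ (see e.g.~\cite{Goldstein2001-ql}). The second is Theorem~\ref{theorem:FunctionalRelation}: combining $X_K=f'(H)X_H$ from~\eqref{eq:proportional} with~\eqref{eq:X_K-X_J_E-proportionality} and $f'(H)|_N=f'(E)$, we obtain on $N$
\[
  X_{J_E} = \left(1-\frac{\pder[J]{E}}{f'(H)}\right)X_K = \left(f'(E)-\pder[J(q,p,E)]{E}\right)X_H\,.
\]
On the set $\{(q,p)\in N : f'(E)-\pder[J(q,p,E)]{E}\neq 0\}$ this scalar factor is nonzero, so $0=X_{J_E}(B_E^i)=\Bigl(f'(E)-\pder[J(q,p,E)]{E}\Bigr)X_H(B_E^i)$ forces $X_H(B_E^i)=\{H,B_E^i\}=0$ there. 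Combined with $\{H,A^i\}|_N=\{H,B_E^i\}|_N$, this gives $\{H,A^i\}=0$ on that set, i.e.\ on this set $H$ is in involution with each component $A^i$, which are then integrals of the motion generated by $H$.

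The result carries essentially no difficulty of its own: once Theorem~\ref{theorem:FunctionalRelation} is in hand, every step is either the standard Kepler bracket computation or the elementary tangency remark. The one point to treat with care is precisely that tangency remark (or its $(H-E)\Phi^i$ substitute): a priori $\{H,A^i\}$ involves derivatives of $g(H)$ transverse to $N$, and one must argue that these do not spoil the agreement with $\{H,B_E^i\}$ along $N$.
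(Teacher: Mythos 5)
Your proof is correct, and it rests on the same two pillars as the paper's: the proportionality of $X_H$, $X_K$ and $X_{J_E}$ on the level set established in Theorem~\ref{theorem:FunctionalRelation}, and the classical vanishing of the Kepler bracket $\{J_E,A_E^i\}$, where $A_E^i=(p\cross L)^i-g(E)\frac{q^i}{r}$ is what you call $B_E^i$. The organization, however, is genuinely different, and in a way that simplifies the delicate point. The paper first transports the bracket from $H$ to $J_E$, writing $\{A^i,H\}=\lambda^{-1}\{A^i,J_E\}$ on the level set, and then must expand $\{A^i,J_E\}$ with the full, $H$-dependent coefficient $g(H)$ still inside $A^i$; besides the frozen Kepler bracket $\{A_E^i,J_E\}$ this produces the extra term $-\frac{q^i}{r}\{g(H),J_E\}$ in \eqref{eq:proofThm1}, whose vanishing on $H^{-1}(E)$ requires the additional observation that $g(H)$ is a function of $K$ and hence conserved by $X_{J_E}$ there. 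You instead freeze $g(H)$ to $g(E)$ \emph{before} invoking $J_E$, against the bracket with $H$ itself, where the tangency of $X_H$ to the level set (or your $(H-E)\Phi^i$ factorization, which is the cleaner of your two justifications since it does not even require the level set to be a submanifold) makes the replacement $A^i\mapsto B_E^i$ immediate; after that, only the globally valid identity $\{J_E,B_E^i\}=0$ and the proportionality $X_{J_E}=\bigl(f'(E)-\pder[J]{E}\bigr)X_H$ are needed, and your explicit scalar factor matches the hypothesis of the theorem exactly. Your route thus avoids the product-rule expansion of $\{A^i,J_E\}$ entirely and isolates the transverse-derivative issue in a single elementary remark, while the paper's route has the minor merit of exhibiting which term in the bracket is responsible for the restriction to the level set. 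Both arguments implicitly take $d=3$, which you state and the paper does not.
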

\begin{proof}
  Fix $E\in\mathcal{E}$. On $H^{-1}(E)$ the flow of $J_E(q,p)$ is proportional to that of $H(q,p)$, with proportionality 
\begin{equation}
  \lambda := \left(1-\frac{\frac{\partial J(q,p,E)}{\partial E}}{f'(E)}\right)\,,
  \end{equation}
which we assume to be regular and nonvanishing. 
we know for the Lie derivatives of the functions $A^i(q,p)$ with respect to $X_H$
\begin{equation}\label{eq:HA->JA}
  \{ A^i,H\} = \mathcal{L}_{X_H}(A^i)
    = \mathcal{L}_{\lambda^{-1}X_{J_E}}(A^i)
    = \lambda^{-1}\mathcal{L}_{X_{J_E}}(A^i)
    = \lambda^{-1}\{A^i,J_E\}\,,
\end{equation}
where all functions are evaluated on $H^{-1}(E)$.

With~\eqref{eq:HA->JA}, we reduced ourselves to check whether $J_E(q,p)$ commutes with the components of the LRL vector on $H^{-1}(E)$.
Namely,
\begin{equation}\label{eq:proofThm1}
\begin{aligned}
      \{A^i, J_E(q,p)\}&=\pder[A^i]{q}\pder[J_E]{p} - \pder[A^i]{p} \pder[J_E]{q}\\
                   &=\left[\pder{q}(p\cross{L})^i -\left(\pder{q}\frac{q^i}{r}\right)g(H)\right]\pder[J_E]{p} - \left[\pder{p} (p\cross{L})^i-\left(\pder{p}\frac{q^i}{r}\right)g(H)\right]\pder[J_E]{q} \\
                    &\quad + \left(-\frac{q^i}{r}\right)\left[\pder[g(H)]{q} \pder[J_E]{p} -\pder[g(H)]{p} \pder[J_E]{q}\right]\,.                           
\end{aligned}
\end{equation}
Observe now that on $H^{-1}(E)$, $A^i = A_E^i := (p\cross {L})^i-g(E)\frac{q^i}{r}$. So the first two terms combine into the Poisson bracket
\begin{equation}
  \left\{A_E^i(q,p), J_E(q,p)\right\}
  = \left\{(p\cross{L})^i-g(E)\frac{q^i}{r},\frac{p^2}{2}-g(E)\frac{1}{r}\right\}
\end{equation}
which vanishes as the Poisson bracket between a standard Kepler Hamiltonian and its LRL vector.

The square bracket that form the last term in \eqref{eq:proofThm1}  amounts to $\{g(H), J_E\}$ evaluated on ${H^{-1}(E)}$. This also vanishes due to $g(H)=g(f^{-1}(K))$ and application of Theorem~\ref{theorem:FunctionalRelation}.
\end{proof}

\begin{theorem}
On regular levelsets of $H$ such that $f(H)<0$, the rescaled LRL vector defined by
$\Bar{A}^i :=-\frac{A^i}{\sqrt{-2f(H)}}$, where $A$ is defined in \eqref{eq: GenLRL},
satisfies the following commutation relations
\begin{equation}\label{eq:LRL Lie Alg}
    \{L^i,L^j\}=\epsilon_{ijk}L^k,\qquad \{\Bar{A}^i,\Bar{A}^j\}=\epsilon_{ijk}L^k,\qquad \{L^i,\Bar{A}^j\}=\epsilon_{ijk}\Bar{A}^k\,,
\end{equation}
which define a Lie algebra isomorphic to so(4).
\end{theorem}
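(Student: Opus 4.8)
The plan is to transport the three classical Kepler Poisson-bracket identities --- the angular-momentum algebra $\{L^i,L^j\}=\epsilon_{ijk}L^k$, the transformation law $\{L^i,A^j\}=\epsilon_{ijk}A^k$, and the closing relation $\{A^i,A^j\}=-2\mathcal{H}_{\mathrm{Kep}}\,\epsilon_{ijk}L^k$ --- onto the present setting by restricting everything to a regular levelset $H^{-1}(E)$, intersected with the open subset on which the preceding theorem guarantees $\{A^i,H\}=0$ (which in particular requires $f'(E)\neq 0$ and $f'(E)-\pder[J(q,p,E)]{E}\neq 0$) and with $f(E)<0$ so that $\bar A$ is defined. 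Two structural facts drive the argument: $H$ depends on $(q,p)$ only through $p^2$, $r$ and $p_r$ and is therefore rotationally invariant, so $\{L^i,H\}=0$ everywhere; and, by the preceding theorem, $\{A^i,H\}=0$ on $H^{-1}(E)$.

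First I would remove the energy-dependent prefactor. Writing $\bar A^i=-\phi(H)A^i$ with $\phi(H):=1/\sqrt{-2f(H)}$ and expanding the brackets with the Leibniz rule, every term in which a derivative hits $\phi(H)$ is proportional to $\phi'(H)\{A^i,H\}$ or to $\phi'(H)\{L^i,H\}$, both of which vanish on $H^{-1}(E)$ by the two facts above. Hence on the levelset $\{\bar A^i,\bar A^j\}=\phi(E)^2\{A^i,A^j\}$ and $\{L^i,\bar A^j\}=-\phi(E)\{L^i,A^j\}$, with $\phi(E)^2=-1/(2f(E))$ a genuine positive constant.

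Next I would reduce $\{A^i,A^j\}$ on the levelset to the standard Kepler bracket. On $H^{-1}(E)$ one has $A^i=A^i_E:=(p\cross L)^i-g(E)\,q^i/r$, the ordinary Kepler Laplace-Runge-Lenz vector for coupling $g(E)$, and the difference $A^i-A^i_E=(g(E)-g(H))\,q^i/r$ is $(H-E)$ times a smooth function. Using this, together with $\{A^i,H\}=0$ on the levelset and $\{H,H\}=0$, one checks that every term of $\{A^i,A^j\}$ carrying a factor $g(E)-g(H)$ or a derivative of $g(H)$ vanishes on $H^{-1}(E)$, so $\{A^i,A^j\}|_{H^{-1}(E)}=\{A^i_E,A^j_E\}|_{H^{-1}(E)}$, a bracket of functions of $(q,p)$ alone. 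The classical Kepler identity then gives $\{A^i_E,A^j_E\}=-2J_E\,\epsilon_{ijk}L^k$, and on the levelset $J_E=K=f(E)$, so $\{A^i,A^j\}|_{H^{-1}(E)}=-2f(E)\,\epsilon_{ijk}L^k$; since $p\cross L$ and $q/r$ transform as vectors under the rotations generated by $L$ while $g(H)$ is rotationally invariant, $\{L^i,A^j\}=\epsilon_{ijk}A^k$ holds identically. Combining with the reductions above yields $\{\bar A^i,\bar A^j\}=\epsilon_{ijk}L^k$, $\{L^i,\bar A^j\}=\epsilon_{ijk}\bar A^k$, and $\{L^i,L^j\}=\epsilon_{ijk}L^k$; putting $N^i_\pm:=\tfrac12(L^i\pm\bar A^i)$ then gives $\{N^i_\pm,N^j_\pm\}=\epsilon_{ijk}N^k_\pm$ and $\{N^i_+,N^j_-\}=0$, the standard presentation of $\mathrm{so}(4)\cong\mathrm{su}(2)\oplus\mathrm{su}(2)$.

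I expect the main obstacle to be the bookkeeping in the reduction $\{A^i,A^j\}|_{H^{-1}(E)}=\{A^i_E,A^j_E\}|_{H^{-1}(E)}$: one has to be sure that restricting the phase-space Poisson bracket to $H^{-1}(E)$ commutes with the substitution $g(H)\mapsto g(E)$, i.e. that no term survives in which a derivative falls on the implicitly-defined function $H$ off the shell. Each such term is proportional to $\{A^i,H\}$ or to $g(E)-g(H)$, so the whole step rests on the involution statement of the preceding theorem --- which is exactly why the non-regular locus and the set where $f'(E)-\pder[J(q,p,E)]{E}$ vanishes must be excluded. Once that is secured, what remains is the classical Kepler computation together with the familiar $\mathrm{su}(2)\oplus\mathrm{su}(2)$ decomposition.
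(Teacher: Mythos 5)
Your proposal is correct and follows essentially the same route as the paper: restrict to the levelset, use the involution $\{A^i,H\}=0$ (and rotational invariance of $H$) to discard all terms where a derivative falls on $f(H)$ or $g(H)$, reduce the remaining bracket to the classical Kepler identity $\{A^i_E,A^j_E\}=-2J_E\,\epsilon_{ijk}L^k$ with $J_E=f(E)$ on shell, and normalise. The only cosmetic differences are that you peel off the prefactor $1/\sqrt{-2f(H)}$ and the substitution $A^i\mapsto A^i_E$ in two explicit stages (the paper does both in one expansion) and you exhibit the $\mathrm{su}(2)\oplus\mathrm{su}(2)$ generators $N^i_\pm$ directly where the paper cites a reference.
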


\begin{remark}
To study the global symmetry, one would need to regularize the problem to complete the temporal flow and then consider the global transformations generated by the integrals above~\cite{Ligon1976}, which is out of the scope of this paper.
\end{remark}

\begin{proof}
Let $E$ be a regular value for $H$ and such that $f(E)<0$, in the rest of this proof we assume all the computations restricted to the levelset $H^{-1}(E)$.
For the first relation, the calculation is the same as in the usual Kepler case. Writing out the second, we have 
\begin{equation}
\begin{split}
    \{\Bar{A}^i,\Bar{A}^j\}&=\frac{1}{-2f(H)}\{A^i_E,A_E^j\} +\frac{1}{-2f(H)}\left(\frac{f'}{-2f(H)}A^i-\frac{q^i}{r}g'\right)\{H,A_E^j\}\\
    &+\frac{1}{-2f(H)}\left(\frac{f'}{-2f(H)}A^j-\frac{q^j}{r}g'\right)\{A_E^i,H\}\\
    &+\frac{1}{-2f(H)}\left(\frac{f'}{-2f(H)}A^i-\frac{q^i}{r}g'\right)\left(\frac{f'}{-2f(H)}A^j-\frac{q^j}{r}g'\right)\{H,H\}\,,
\end{split}
\end{equation}
where $f':=f'(H)$ and $g':=g'(H)$. The second and third term here vanish due to proportionality of $H$ to $J_E$, which commutes with $A_E^i$, while the last term vanishes trivially. Since we are on a fixed levelset, we can consider $g(E)$ constant in the first term, and therefore the bracket yields, as for the usual Kepler problem,
\begin{equation}
    \{A^i_E,A_E^j\}=-2\left(\frac{p^2}{2}-\frac{g(E)}{r}\right)\epsilon_{ijk}L^k = -2 f(E)\, \epsilon_{ijk}L^k \,,
\end{equation}
completing the calculation of the bracket of rescaled LRL vectors. The remaining bracket again reduces to the computation for the classical Kepler system in complete analogy to the above computation.

The commutation relations~\eqref{eq:LRL Lie Alg} then define a Lie algebra isomorphic to so(4) as shown e.g. in~\cite[Chapter 3.2, Proof (3.6)]
{Cushman1997}.
\end{proof}

While in this section we proved that for each fixed value of the energy the family of Hamiltonians satisfying~\eqref{eq: implicit} has a flow which is proportional to the Kepler flow and admits a LRL vector, we do not know the regularity of the dependence of these objects on the energy itself nor how to relate~\eqref{eq: implicit} and a Kepler Hamiltonian beyond the energy surface. In the following section, we will consider this problem, looking for an energy-independent way to relate Kepler problems and the implicitly defined Hamiltonians~\eqref{eq: implicit}.

What we can immediately observe is that, while the shape of orbits is the same in both~\eqref{eq: implicit} and in a Kepler problem, the energy levelsets $H_{\rm Kep}^{-1}(E)$ and $H^{-1}(E)$ foliate the phase space in a different way. The Hamiltonian $K(q,p)$, and therefore also the implicitly defined Hamiltonians~\eqref{eq: implicit}, induce a bundle of non-equivalent Kepler orbits, the global structure of which is determined by $g(H)$.

%%-----------------------------------------------------------

\subsection{Off-shell equivalence to Kepler dynamics} \label{off-shell}

While the on-shell equivalence discussed in the previous subsection explains why the Hamiltonians implicitly defined by~\eqref{eq: implicit} have an additional constant of motion and hence closed orbits, it does not address the violation of Kepler's third law: the fact that Keplerian energy surfaces can be stacked differently in the Kepler bundle. We now turn to this issue, and address the question whether one can also map families of orbits with different energies onto a fixed Kepler system. Since we would like to avoid issues of singularities and/or topology, we will restrict ourselves to a local construction. In other words, we now aim to generalise the on-shell (on a fixed energy surface) orbital equivalence  to an off-shell equivalence (for a neighbourhood of orbits of possibly different energies).

The violation of Kepler's third law demonstrates a physical difference between the Kepler problem and the implicitly defined Hamiltonians on the phase space, so it should not come as a surprise that looking for such a relation will involve a transformation of the phase space itself. The mapping we are looking for therefore involves both a time reparametrization (related to the mapping from $H$ to $K \equiv f(H)$) as well as a canonical transformation, whose composition will (locally) transform the Kepler Hamiltonian to the implicitly defined ones and establish an orbital equivalence in this sense.

We will provide evidence for the existence of such a canonical transformation by explicitly constructing it up to fifth PN order. Note that the PN expansion differs from the expansion around an energy surface; even when extending the canonical transformation to all PN orders (or having a closed expression for it), this would still only involve a local equivalence as singularities or topological issues might prevent one to extend the mapping to the whole phase space.  
Addressing the extension to all PN orders and the question of convergence of the series constructed below (even just in an asymptotic sense) is not a trivial endeavour, as is the question of global existence of the phase space transformation. Therefore, we will leave the all-order analysis for the whole phase space for future research. 

\bigskip
 
The goal is to find a solution $H$ to the functional equation \eqref{eq: implicit} to any desired PN order from the perturbation of the Kepler system. To find a perturbative solution for $H$, it is useful to take $H$ itself dimensionless\footnote{We divide out the rest-mass energy $mc^2$ and set $m=1$ as previously.}, but explicitly include the PN expansion parameter $\epsilon$, see Section~\ref{sec:PN}:
\begin{equation} \label{eq: pert}
    f(H)=\epsilon \frac{p^2}{2}-g(H)\frac{\epsilon}{r}\,,
\end{equation}
which is solved to lowest order by
\begin{equation}\label{eq: KepHam}
H=\epsilon H_{\rm Kep}= \epsilon \left(\frac{p^2}{2}-\frac{1}{r}\right)\,,
\end{equation}
that is, the Kepler Hamiltonian with an extra factor $\epsilon$. More specifically we will show the following.
 
  \begin{theorem} \label{theorem:Kepler_equivalence}
For given $C^\infty$ functions $f$ and $g$, the relation \eqref{eq: pert} can be solved to at least PN order 5 by 
\begin{equation}\label{eq:H_from_H_Kep}
H=\Phi^\star \tau(\epsilon H_{\text Kep})\,,
\end{equation}
where $\tau:\R\to\R, \,E \mapsto\tau(E)$ is $C^\infty$ with $\tau'(0)=1$ defining a near-identity time re-parametrization  and
$\Phi$ is a near-identity canonical transformation. 
\end{theorem}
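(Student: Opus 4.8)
The plan is to solve the functional equation \eqref{eq: pert} order by order in $\epsilon$, starting from the lowest-order solution $H = \epsilon H_{\text{Kep}}$, and at each order to absorb the new terms into a redefinition of the energy (the map $\tau$) together with a canonical transformation (the map $\Phi$), building both as near-identity maps with PN-graded generators. Concretely, I would write $\tau(E) = E + \epsilon \tau_1(E) + \epsilon^2 \tau_2(E) + \dots$ and realise $\Phi$ as the time-one flow of a PN-expanded generating Hamiltonian $G = \epsilon G_1 + \epsilon^2 G_2 + \dots$, so that $\Phi^\star F = F + \epsilon\{G_1,F\} + \epsilon^2(\{G_2,F\} + \tfrac12\{G_1,\{G_1,F\}\}) + \dots$ acts as a Lie series. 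Substituting the ansatz $H = \Phi^\star\tau(\epsilon H_{\text{Kep}})$ into \eqref{eq: pert} and expanding both sides in $\epsilon$, the zeroth order is satisfied identically; at each subsequent order $\epsilon^{k+1}$ one obtains an equation of the form (something determined by lower orders) $=$ (contribution of $\tau_k$) $+$ (contribution of $G_k$ via a Poisson bracket with $H_{\text{Kep}}$), i.e. a cohomological equation over the Kepler dynamics.

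The key structural point, which I would establish first, is that the terms appearing at order $\epsilon^{k+1}$ are polynomials in $p^2, p_r^2, 1/r$ of the type classified in \eqref{eq: PN-counting} — this follows because the implicit relation \eqref{eq: pert} only ever produces such monomials, and the Lie-series corrections $\{G_j, \cdot\}$ preserve this class provided the $G_j$ are chosen of the same type. So the problem reduces to: given a target polynomial $\Lambda_k$ of homogeneous "PN degree", can one choose $\tau_k(H_{\text{Kep}})$ (which contributes terms that are functions of the Kepler Hamiltonian, i.e. symmetric combinations like powers of $\tfrac{p^2}{2} - \tfrac1r$) and $\{G_k, H_{\text{Kep}}\}$ (which, since $H_{\text{Kep}}$ generates the Kepler flow, contributes terms with zero average along closed Kepler orbits) so that their sum equals $\Lambda_k$ modulo already-handled pieces? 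The decomposition is: project $\Lambda_k$ onto the part that is a function of $H_{\text{Kep}}$ alone (absorbed by $\tau_k$) and the complementary part (killed by the homological operator $\{\cdot, H_{\text{Kep}}\}$, which is solvable precisely because the Kepler flow is periodic so the cohomology is controlled — one integrates along orbits, the obstruction being exactly the orbit-average, which has been removed). I would carry this out explicitly for $k = 1, 2, 3, 4, 5$, exhibiting the $\tau_k$ and $G_k$; the normalization $\tau'(0) = 1$ is automatic since $\tau_k(0)$-type constants can be dropped and the linear term is fixed by matching $f(H) = H + O(H^2)$ at leading order.

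The main obstacle I expect is solving the homological equation $\{G_k, H_{\text{Kep}}\} = \text{(given function)}$ within the finite-dimensional space of PN-type polynomials: one must verify that every function of the required class with vanishing Kepler-orbit average actually lies in the image of $\ad_{H_{\text{Kep}}}$ restricted to that same class, and that the needed generator $G_k$ is again a polynomial in $p^2, p_r^2, 1/r$ (not something involving, say, the eccentric anomaly non-polynomially). For the Kepler problem this is the classical statement that secular perturbation theory closes on rational invariants order by order, and in practice one checks it degree by degree — this is where the "at least 5PN" restriction comes from, since beyond some order the combinatorics of the polynomial spaces (and possibly genuine resonant obstructions tied to the $1/\sqrt{-2H_{\text{Kep}}}$ factors in action-angle variables) may spoil closure, which is precisely why the authors flag the all-order question as open. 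A secondary technical point is bookkeeping: because of the overall $1/\epsilon$ and the rest-mass shift noted after \eqref{eq: PN-counting}, one must track that "PN order 5" in the physical Hamiltonian corresponds to order $\epsilon^5$ in the dimensionless $H$ here, and that the functions $f, g$ being merely $C^\infty$ (not analytic) is harmless since only finitely many Taylor coefficients $f_1,\dots,f_5$, $g_1,\dots,g_5$ enter.
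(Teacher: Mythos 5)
Your overall strategy coincides with the paper's: both solve \eqref{eq: pert} order by order starting from $\epsilon H_{\rm Kep}$, absorbing at each order the part of the new terms that is a function of $H_{\rm Kep}$ into the energy rescaling $\tau$ and the remainder into a near-identity canonical transformation built as a Lie series, and both ultimately rest on an explicit computation through fifth order. Your normal-form framing --- splitting each order into the Kepler-orbit average (a function of $H_{\rm Kep}$, handled by $\tau_k$) and a zero-average remainder lying in the image of $\ad_{H_{\rm Kep}}$ (handled by $G_k$) --- is a more conceptual justification of solvability than the paper offers; the paper simply posits an ansatz with undetermined coefficients and solves the resulting linear systems.

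However, one structural claim in your write-up is wrong as stated and would stall the construction: the generators $G_k$ cannot be ``polynomials in $p^2$, $p_r^2$, $1/r$'' of the same type as the Hamiltonian terms. The Poisson bracket flips parity in $(p\cdot q)$ (for instance $\{p^2,1/r\}=2(p\cdot q)/r^3$), so if $G_k$ were even in $(p\cdot q)$ then $\{G_k,H_{\rm Kep}\}$ would be odd in $p_r$ and could never cancel the even-parity target terms such as $H_{\rm Kep}^2$ and $H_{\rm Kep}/r$; with your proposed class of generators the canonical transformation contributes nothing useful and $\tau$ alone cannot absorb the non-symmetric terms. The generators must instead be odd, of the form $(p\cdot q)$ times an even polynomial --- precisely the paper's ansatz \eqref{eq: GenGen} and the reason for the half-integer grading in \eqref{def:Wj}; the pointwise virial identity $\{p\cdot q,H_{\rm Kep}\}=2H_{\rm Kep}+1/r$ is what makes the first homological equation solvable. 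Once the class of generators is enlarged in this way your scheme goes through and reproduces the paper's computation. A minor bookkeeping slip: in the paper's dimensionless conventions the lowest-order term of $H$ is already $O(\epsilon)$, so PN order 5 corresponds to $\epsilon^6$, not $\epsilon^5$.
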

 \begin{remark}
As we work with dimensionless Hamiltonian $H$, the lowest order term in $H$ is order $\epsilon$. Therefore, PN order 5 here corresponds to order $\epsilon^6$. When going back to a dimensionfull Hamiltonian, this additional factor of $\epsilon$ vanishes. This understanding of PN orders coincides with the `relative $k$PN order' of~\cite{Tanay}.
 \end{remark}
 
The proof will be given by an explicit construction based on Lie transform perturbation theory combined with a rescaling of the energy function. 

Let us introduce the real vector spaces 
 \begin{equation}\label{def:Wj}
     W_j = \text{span} \left\{  \frac{(p^2)^l(p\cdot q)^n}{r^m} \;\big|\; (l,m,n)\in \N^3, \, l+m-\frac12n=j+1\right\}\,.
 \end{equation}
 For instance, the Kepler Hamiltonian $H_{\rm Kep}$ is in $W_0$. We will mainly consider $W_j$ with non-negative integer $j$ resulting from even  $n$ in \eqref{def:Wj} such that $F_j\in W_j$ has PN order $j$ (see the discussion following \eqref{eq: PN-counting}). But as we will see below, also half-integer $j$ resulting from odd $n$ in \eqref{def:Wj} can be important. 
Note that for $F_i\in W_i$ and $F_j\in W_j$, 
\begin{equation}\label{eq:FiFj}
\{F_i,F_j\}\in W_{i+j+\frac32}\,,    
\end{equation}
implying that 
$$
W=\bigoplus_{k\in \N}W_{k/2}
$$ 
is closed under the Poisson bracket. In particular, 
$$
\{p\cdot q,F_j\} \in W_{j}\,.
$$
Let us write the energy rescaling $\tau$ in \eqref{eq:H_from_H_Kep} in a power series as 
\begin{equation} \label{eq: EnRedef}
   \tau(E) =\sum_{n=0}^\infty \delta_n  E^{n+1} \,,
\end{equation}
with $\delta_0=1$. For counting the PN orders of $\tau$ applied to some Hamiltonian function $H$ it is important to note that for $F_i\in W_i$ and $F_j\in W_j$,
$$
F_i\,F_j \in W_{i+j+1}\,,
$$
which implies that $W$ is also closed under multiplication.

We will consider a succession of near-identity  canonical transformations each of which is obtained from the flow of the Hamiltonian vector field generated by a suitable function $G$. To describe the construction it is useful to introduce the adjoint operator 
\begin{equation}
    \ad_G(\cdot):=\{G,\cdot\}\,.
\end{equation}
For convenience, we define $n>0$ repeated iterations of the adjoint operator by 
\begin{equation}
[\ad_G]^n := [\ad_G]^{n-1} \circ \ad_G, \qquad
[\ad_{G}]^0 := \mathrm{Id}_{C^\infty(M)}\,.
\end{equation}

Under the canonical transformation given by the time-one map of the flow generated by the Hamiltonian $G$ a function $F$ transforms according to
\begin{equation}\label{eq:GeneralCanTrafo}
    F\mapsto \Phi^\star F=\sum_{m=0}^\infty \frac{
    1}{m!}[\ad_G]^m F.
\end{equation}
From \eqref{eq:FiFj} we get that for $F_i\in W_i$ and $F_j\in W_j$,
\begin{equation}\label{eq:adFiFj}
    \ad_{F_i}(F_j)\in W_{i+j+\frac32}
\end{equation}
The idea now is to solve the functional relation \eqref{eq: implicit} 
order by order with $H_\text{Kep}$
through a succession of canonical transformations generated by functions $G_i$ and an energy rescaling of the form \eqref{eq: EnRedef}.
To this end let us first inspect the functional relation in terms of the power series for $f$ and $g$ in \eqref{eq: pow ser} which gives
 \begin{equation} \label{eq: impl exp}
    H+f_1 H^2+ f_2 H^3 +\dots= \epsilon\frac{p^2}{2}-\left(1+g_1 H+g_2 H^2+\dots\right)\frac{\epsilon}{r}\,.
\end{equation}
In order to find solutions for integer PN orders we will need terms $\ad_{G_i}(H_{\rm Kep})$ in the canonical transformations to yield integer order and hence the $G_i$ to be have half-integer order (see \eqref{eq:adFiFj}). It turns out that this can be achieved by the ansatz
\begin{equation} \label{eq: GenGen}
    G_{i-\frac12}(q,p)=(p\cdot q) \Lambda_i(a)(q,p)\,,
\end{equation}
where $\Lambda_i(a)$ again denotes a general function of order $\epsilon^i$ with coefficients $a_{l,m,n}$ as  
defined in \eqref{eq: PN-counting}.

Each such $G_{i-\frac12}$ generates a canonical transformation $\Phi_i$ and will be determined such that
\begin{equation} \label{eq:H_sought}
H:=\Phi_n^\star\cdots\Phi_2^\star\Phi_1^\star \tau(\epsilon H_\text{Kep})\,,
\end{equation}
with suitable $\delta_i$ in \eqref{eq: EnRedef} defining the energy rescaling $\tau$ solves the functional relation \eqref{eq: impl exp} to order $n$.

\begin{lemma} \label{lemma:oders_trans_H}
For positive integers $k$ and $i_1\le i_2\le\ldots\le i_m$, let $I=(i_m,\ldots,i_2,i_1)$
\begin{equation}
    \ad^I_{G_I} := \ad_{G_{i_m-\frac12}}\dots\ad_{G_{i_2-\frac12}}\ad_{G_{i_1-\frac12}}
\end{equation} 
with $G_{i_k-\frac12}\in W_{i_k-\frac12}$, $k=1,\ldots,m$, and $\abs{I}=\sum_{k=1}^m i_k$. 
    Let $H$ be defined as in \eqref{eq:H_sought}. Then the PN expansion of $H$ to order $N$ is given by 
\begin{equation} \label{eq: RedAndTrans}
\sum_{j=0}^N \epsilon^{j+1} H_j\, ,
\end{equation}
where
\begin{equation}
    H_j := \sum_{n=0}^{j} \sum_{k=0}^{j-n} \sideset{}{'}\sum_{\substack{I \in\mathbb{N}_+^{k} \\  \abs{I}=j-n}} \frac{\delta_n}{k!} \ad_G^I\left(H_{\rm Kep}^{n+1}\right)
\end{equation}
is in $W_j$. Here the prime in the third sum denotes that the summation is restricted to tuples of ordered integers
$I=(i_k,\ldots,i_2,i_1)\in\mathbb{N}_+^{k}$ with $i_1 \le i_2\le\ldots\le i_k$.
\end{lemma}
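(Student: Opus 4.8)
The plan is to prove the lemma by expanding \eqref{eq:H_sought} directly and tracking, in parallel, three quantities: the power of $\epsilon$, the $W$-grading, and the combinatorial prefactors produced by composing the exponential series \eqref{eq:GeneralCanTrafo}. The statement concerns only the formal PN expansion, so no convergence input is needed; everything reduces to the graded Poisson algebra recorded in \eqref{eq:FiFj}--\eqref{eq:adFiFj} together with the closure of $W=\bigoplus_{k}W_{k/2}$ under multiplication, $W_aW_b\subseteq W_{a+b+1}$.

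First I would expand the energy rescaling, $\tau(\epsilon H_{\rm Kep})=\sum_{n\ge 0}\delta_n\,\epsilon^{n+1}H_{\rm Kep}^{\,n+1}$ with $\delta_0=1$; since $H_{\rm Kep}\in W_0$, multiplicative closure gives $H_{\rm Kep}^{\,n+1}\in W_n$, which already produces the $k=0$ terms of \eqref{eq: RedAndTrans}. Then I would expand the composed transformation: by \eqref{eq:GeneralCanTrafo}, $\Phi_i^\star=\sum_{m_i\ge0}\frac1{m_i!}[\ad_{G_{i-1/2}}]^{m_i}$, so
\[
  \Phi_n^\star\circ\cdots\circ\Phi_1^\star=\sum_{m_1,\dots,m_n\ge 0}\frac{1}{m_n!\cdots m_1!}\,[\ad_{G_{n-1/2}}]^{m_n}\cdots[\ad_{G_{1-1/2}}]^{m_1}\,,
\]
and each operator string equals $\ad^I_{G_I}$ for the unique sorted tuple $I$ containing $m_l$ copies of $l$, of length $k=\sum_l m_l$ and with $|I|=\sum_l l\,m_l$. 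Applying $\ad^I_{G_I}$ to a summand $\delta_n\epsilon^{n+1}H_{\rm Kep}^{\,n+1}$ and using \eqref{eq:adFiFj} repeatedly, each factor $\ad_{G_{i-1/2}}$ shifts the $W$-grading by an amount depending only on $i$ and, by the $\epsilon$-weight carried by the generators in \eqref{eq: GenGen}, shifts the power of $\epsilon$ by a matching amount, so the computation stays of the shape ``$\epsilon^{w+1}\times$(element of $W_w$)'' throughout. Hence $\ad^I_{G_I}(H_{\rm Kep}^{\,n+1})$ lies in a single $W$-space whose index, and the attached power of $\epsilon$, are determined by $n$, $|I|$ and $k$; collecting all contributions to $\epsilon^{j+1}$ then reproduces the triple sum defining $H_j$ term by term, each term in $W_j$.

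It remains to observe that everything is finite at each order: $\ad_{G_{i-1/2}}$ strictly raises the PN order by at least $i\ge1$, so only the transformations with $i\le N$ and only boundedly many powers $m_i$ contribute at orders $\le N$; the sums in \eqref{eq: RedAndTrans} are therefore finite, and closure of $W$ under $\{\cdot,\cdot\}$ and multiplication keeps all intermediate objects in $W$, whence $H_j$ is a finite sum of elements of $W_j$. As an alternative organisation one can induct on $n$: assuming $\Phi_{n-1}^\star\cdots\Phi_1^\star\tau(\epsilon H_{\rm Kep})$ already has the claimed form, composing with $\Phi_n^\star=\exp(\ad_{G_{n-1/2}})$ only adds terms carrying at least one factor $\ad_{G_{n-1/2}}$, and reindexing recovers the formula with $n$ increased.

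I expect the main obstacle to be precisely this double bookkeeping rather than any conceptual point: keeping the passage from the multi-index $(m_1,\dots,m_n)$ to sorted tuples $I$ consistent with the prefactor coming out of the exponential, and --- the more delicate part --- checking that the $W$-grading of $\ad^I_{G_I}(H_{\rm Kep}^{\,n+1})$ is compatible with the power of $\epsilon$ it multiplies, which is what fixes the $\epsilon$-weight (equivalently the placement within $W$) of each generator $G_{i-1/2}$ in \eqref{eq: GenGen}. Once these two accountings are set up consistently the lemma follows by inspection.
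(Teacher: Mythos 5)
Your proposal is correct and follows essentially the same route as the paper, whose proof consists of the single sentence that the result ``follows immediately from ordering the terms in \eqref{eq:H_sought} taking into account \eqref{eq:GeneralCanTrafo} and \eqref{eq:adFiFj}''; you simply carry out that ordering explicitly. The one delicate point you rightly flag --- matching the multinomial prefactor $\prod_l 1/m_l!$ coming from the product of exponentials against the $\delta_n/k!$ attached to each sorted tuple in the stated formula for $H_j$ --- is genuinely where care is needed (the two agree when all entries of $I$ coincide but not otherwise), so making that reconciliation explicit would be the only substantive addition to your argument.
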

%Remark about the H_1 being the 0PN Hamiltonian.
\begin{proof}
    The result follows immediately from ordering the terms in \eqref{eq:H_sought} taking into account  \eqref{eq:GeneralCanTrafo} and \eqref{eq:adFiFj}.
\end{proof}

We now come to the proof of Theorem~\ref{theorem:Kepler_equivalence}.

\begin{proof}
The proof is done by explicit computation. 

From Lemma~\ref{lemma:oders_trans_H} we get
\begin{equation}  \label{eq:H_oder_epsilon2}
\begin{split}
H =\quad &\phantom{+} \epsilon H_\text{Kep} \\ 
&+\epsilon^2 \big(\{G_{1-\frac12},H_\text{Kep}\} + \delta_1 H^2_\text{Kep} \big)\\
&+\epsilon^3\big(\delta_2 H_{\rm Kep}^3 +\{G_{2-\frac12},H_{\rm Kep}\}  
        +\{G_{1-\frac12},\delta_1 H_{\rm Kep}^2\}+\frac{1}{2}\{G_{1-\frac12},\{G_{1-\frac12},H_{\rm Kep}\}\} \big) \\
&+O(\epsilon^4)\,.
\end{split}
\end{equation}
A fast way to proceed is to rewrite the functional relation \eqref{eq: impl exp} as
\begin{equation}\label{eq:iteration_proced}
    H= \epsilon\frac{p^2}{2}-\left(1+g_1 H+g_2 H^2+\dots\right)\frac{\epsilon}{r} -f_1 H^2- f_2 H^3 -\dots
\end{equation}
Equating the right hand sides of \eqref{eq:H_oder_epsilon2} and \eqref{eq:iteration_proced} at order $\epsilon$ gives 
$$
 H_0=\left(\frac{p^2}{2}-\frac{1}{r}\right)\,.
$$
Plugging this $H_0$ into the right hand side of
\eqref{eq:iteration_proced}, reading off the terms 
of order $\epsilon^2$ and equating with the order $\epsilon^2$ in \eqref{eq:H_oder_epsilon2}
gives
\begin{equation}
     -f_1 H_{\rm Kep}^2-g_1 H_{\rm Kep}\frac{1}{r}=\delta_1 H_{\rm Kep}^2+ \{G_{1-\frac12},H_{\rm Kep}\} \,.
\end{equation}
This is solved
by choosing the coefficient of the energy redefinition as 
 \begin{equation}
      \delta_1 = 2 g_1-f_1  
\end{equation}
and the generating function
\begin{equation} \label{eq: G_1}
    G_{1-\frac12}= - g_1 (p\cdot q) H_{\rm Kep}\,. 
\end{equation}
Filling in the $\epsilon H_0 + \epsilon^2 H_1$  into the right hand side of
\eqref{eq:iteration_proced}, reading off the terms 
of order $\epsilon^3$ and equating with the order $\epsilon^3$ in \eqref{eq:H_oder_epsilon2}
gives
\begin{equation}
\begin{split}
        (2f_1^2-f_2)H_{\rm Kep}^3+(3f_1g_1-g_2)\frac{H_{\rm Kep}^2}{r}+g_1^2\frac{H_{\rm Kep}}{r^2}&= \delta_2 H_{\rm Kep}^3 +\{G_{2-\frac12},H_{\rm Kep}\} \\ 
        &+\{G_{1-\frac12},\delta_1 H_{\rm Kep}^2\}+\frac{1}{2}\{G_{1-\frac12},\{G_{1-\frac12},H_{\rm Kep}\}\} \,.
\end{split}
\end{equation}
This can be solved by choosing the next coefficient in the energy rescaling as
\begin{equation}
 \begin{split}
      \delta_2 = 5 g_1^2-6 g_1 f_1+2 g_2+2 f_1^2-f_2
 \end{split}
 \end{equation}
 and the generating function
 \begin{equation}
 \begin{split}
            G_{2-\frac12} = (p\cdot q) \left(\frac{1}{2} \left(-g_1^2+2 g_1 f_1-2 g_2\right) H_{\rm Kep}^2+\frac{\frac{g_1^2}{2} H_{\rm Kep}}{r}\right) \,.
 \end{split}
 \end{equation}
We have carried out the computation to order 5PN ($\epsilon^6$) with the help of \textsf{ Mathematica} and present the computations and results in ancillary files~\cite{TODO}.
\end{proof}
We note that, assuming the particular form~\eqref{eq: GenLRL} of the LRL vector, one can show this is conserved off-shell as well, to at least order 5PN.

%%-----------------------------------------------------------

\subsection{Hidden symmetries require Kepler dynamics} \label{HidSym}

In the previous sections we have described a class of relativistic Hamiltonians that turns out to be equivalent to a classical Kepler problem, either being proportional to it on its energy levels or using an approximate canonical transformation and time reparametrization. In both cases, we also constructed the modified LRL vector.

In this section, we aim to investigate a more general question: what is the largest class of relativistic two-body Hamiltonians (within a certain set of plausible Hamiltonians) that share the symmetries of the Kepler problem? And secondly, is this class related to the Kepler system through canonical transformation and time reparametrization? This would in effect generalise an aspect of Bertrand's Theorem, as all Hamiltonians obeying the symmetries would be equivalent to the Kepler problem - just like in the classical context the only system obeying the symmetries (which requires a vanishing perihelion shift) is Kepler.

\begin{theorem} \label{theorem: LRL}
Let a spherically symmetric class of relativistic two-body Hamiltonians be given by
\begin{equation}
    H=\frac{1}{\epsilon}\left(\epsilon H_0+\epsilon^2H_1+\epsilon^3H_2+\dots\right)\,,
\end{equation}
where
\begin{equation} \label{eq: GenHam}
    H_0=\frac{p^2}{2}-\frac{1}{r}
    \qquad
    \mbox{and, for $j\geq1$,}
    \qquad
    H_j=\Lambda_{j}(c)\,.
\end{equation}
At least up to and including 5PN, these Hamiltonians are canonically conjugate to Kepler up to time reparametrization if and only if they conserve an extra vector, which to leading order is given by
\begin{equation}
    A^i_0(q,p)=(p\cross L)^i-\frac{q^i}{r}\,,
\end{equation}
and may contain (and in general will contain) corrections at higher orders.    
\end{theorem}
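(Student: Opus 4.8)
The plan is to prove the two implications of the ``if and only if'' separately, handling both within the order-by-order PN framework already developed. The ``only if'' direction is essentially contained in the previous subsections: if a Hamiltonian of the form~\eqref{eq: GenHam} is canonically conjugate to Kepler up to a time reparametrization, then it is conjugate to a Hamiltonian of the Kepler-like type~\eqref{eq: implicit} (with appropriate $f,g$ reconstructed from the reparametrization), and Theorem~2 together with the off-shell construction of Theorem~\ref{theorem:Kepler_equivalence} shows the rescaled LRL vector~\eqref{eq: GenLRL} is conserved; pulling it back through the conjugation gives a conserved vector whose leading term is $A_0^i = (p\cross L)^i - q^i/r$. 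So the real content is the ``if'' direction: assume $H$ conserves some vector $A^i = A_0^i + \epsilon A_1^i + \dots$ (with $A_0^i$ as stated and the $A_j^i$ built from the allowed $SO(3)$-covariant monomials), and deduce that $H$ is canonically conjugate to Kepler up to time reparametrization through 5PN.

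For the ``if'' direction I would proceed inductively in the PN order. At each order $n$, the condition $\{A^i, H\} = 0$ expanded to order $\epsilon^{n}$ gives a linear equation relating $H_n$, $A_n^i$, and lower-order data that are already known. The strategy is: (i) parametrize $H_n = \Lambda_n(c)$ and $A_n^i$ in terms of their unknown coefficients; (ii) use the already-established off-shell result (Theorem~\ref{theorem:Kepler_equivalence}) which tells us precisely which $H_n$ arise from \emph{some} Kepler-like $f,g$ via a canonical transformation plus energy rescaling — this carves out an affine subspace $\mathcal{K}_n$ of admissible $H_n$; (iii) show that the constraint imposed by demanding a conserved $A^i$ at order $n$ forces $H_n \in \mathcal{K}_n$, after exploiting the residual freedom to absorb terms into a canonical transformation generated by a function of the form~\eqref{eq: GenGen} and into the next energy-rescaling coefficient $\delta_n$. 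Concretely, one writes $\{A^i, H\}=0$ at order $n$, substitutes the inductive hypothesis that $H$ agrees through order $n-1$ with the pullback $\Phi_{n-1}^\star\cdots\Phi_1^\star\tau(\epsilon H_{\rm Kep})$, and reads off that the order-$n$ obstruction lies in the image of $\ad_{p\cdot q}$ acting on $W_{n-1}$ modulo the span of $\{H_{\rm Kep}^{k}/r^{m}\}$-type terms — exactly the terms that $G_{n-\frac12}$ and $\delta_n$ can cancel.

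The main obstacle is step (iii): showing that the LRL-conservation constraint is not merely \emph{compatible} with lying in $\mathcal{K}_n$ but actually \emph{implies} it, i.e. that there are no ``extra'' Hamiltonians which conserve an (order-$n$-corrected) LRL-type vector yet fail to be conjugate to Kepler. This is a finite-dimensional linear-algebra statement at each order — comparing the dimension of the solution space of $\{A^i,H\}=0$ (over all admissible $A^i$ corrections) against the dimension of $\mathcal{K}_n$ — but verifying it requires carefully enumerating the monomial bases of $W_j$ for $j$ up to $5$, computing the relevant Poisson brackets, and checking the rank conditions. Since the paper states the result only ``up to and including 5PN,'' I would carry this out order by order with computer algebra (as was done for Theorem~\ref{theorem:Kepler_equivalence}), presenting the explicit generating functions $G_{n-\frac12}$, the coefficients $\delta_n$, and the form of the corrections $A_n^i$ in ancillary files, and flag that the inductive step does not obviously close at all orders because the dimension count must be redone at each order and a uniform argument is not available.
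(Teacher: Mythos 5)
Your proposal follows essentially the same route as the paper: make an order-by-order Ansatz for the corrected LRL vector of the form~\eqref{eq: LRLAns}, impose $\{A^i,H\}=0$ at each PN order to extract constraints on the coefficients $c_{l,m,n}$ (after discarding the residual freedom proportional to $A^i_0 H_0^k$, which commutes trivially), and then match the constrained Hamiltonian to the time-reparametrized, canonically transformed Kepler Hamiltonian built from the generating functions $G_{n-\frac12}$ and the coefficients $\delta_n$ of Subsection~\ref{off-shell}, verifying everything through 5PN by computer algebra. The paper likewise treats the converse direction as immediate and carries out exactly this explicit order-by-order computation, so no substantive difference or gap is present.
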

This additional conserved vector can be seen as a relativistic version of the Laplace-Runge-Lenz vector.

\begin{remark} \label{CoefConstr}
There are two notions we can use to constrain the number of coefficients $c_{l,m,n}$ in the Hamiltonian~\eqref{eq: GenHam} (see also~\eqref{eq: PN-counting}). Firstly, if a particle is far away from any gravitational body, one expects the attraction to become negligible and the Hamiltonian to approach the special relativistic Hamiltonian, being an expansion only in $p^2$. Therefore, all momentum-only terms must be solely dependent on the regular momentum and independent of the radial momentum $p_r$. This removes all terms with coefficients $c_{l,0,n}$ where $n\neq 0$. Secondly, the first order Hamiltonian is known to possess an ambiguity, allowing one to shift the radius such that the term $\sim p_r^2/r$ vanishes~\cite{HiidaOkamura,Bjerrum-Bohr:2002gqz}. This kind of ambiguity can be expected also at higher orders, but finding these is not a trivial task and not necessary for our analysis.
\end{remark}
\begin{proof}
While the second part of Theorem~\ref{theorem: LRL} of course is a tautology - all systems conjugate to Kepler problems conserve the symmetries of Kepler problems -, the first part is not at all obvious. We have checked the statement up to fifth post-Newtonian order, and will show here the first two.

To prove the equivalence, we want to show that the Hamiltonian of a candidate symmetric system can be related through a time reparametrization and a canonical transformation to the Kepler system. Since the Hamiltonians are divided up in separate orders, we can demand this transformation exists on each order individually. We find the symmetric Hamiltonian and its associated LRL vector by taking an Ansatz for the vector and its corrections and requiring they commute up to an order. That gives a set of relations among both the coefficients $c_{l,m,n}$ of the Hamiltonian~\eqref{eq: GenHam} and $\alpha_{l,m,n}$, $ \beta_{l,m,n}$ of the LRL Ansatz, of which the $\epsilon^{j}$-order term is given by
\begin{equation} \label{eq: LRLAns}
    A_j^i= \Lambda_{j-1}(\alpha)(p\cdot q) p^i+\Lambda_{j}(\beta) q^i\,,
\end{equation} 
where we assume~\eqref{eq: PN-counting}. The Hamiltonian given in terms of the remaining free variables can then be matched to the time-reparametrized, canonically transformed Kepler Hamiltonian, constructed in the same way as in the previous subsection. 

At first non-leading order for example, the terms proportional to $\epsilon$ take the form
\begin{equation}
    \{H,A^i\}=\{ H_0,\epsilon A^i_1\}+\{\epsilon H_1,A^i_0\} = 0\,,
\end{equation}
which results in relations among the $3$ coefficients $c_{l,m,n}$ of the Hamiltonian (see Remark~\ref{CoefConstr}) and the $9$ coefficients of the Ansatz for the LRL vector at first order. The existence of a LRL vector up to this order requires it to take the following form
\begin{equation}
\begin{aligned}
\beta _{1,1,0}&= 3 \alpha _{1,0,0}+\beta _{1,0,0} c_{1,1,0}+4 \beta _{1,0,0} c_{2,0,0}
\,,
&\beta _{2,0,0}&= -\alpha _{1,0,0}\,,\\
\beta _{0,2,0}&= -2 \left(\alpha _{1,0,0}+\beta _{1,0,0} c_{1,1,0}+4 \beta _{1,0,0} c_{2,0,0}\right)\,,
&\alpha _{0,1,0}&= -2 \alpha _{1,0,0}
\,,
\end{aligned}
\end{equation}
with all other coefficients vanishing because of the choices in Remark~\ref{CoefConstr}. As $\beta_{1,0,0}$ is the parameter determining the overall size of the vector, the only free parameter that is left over is $\alpha_{1,0,0}$. The term in the LRL corresponding to this parameter turns out to be proportional to $A^i_0 H_0$. These are trivially commuting quantities with $H_0$, so we can set the coefficient to $0$, yielding an expression completely fixed in terms of two of the coefficients of the Hamiltonian. The corrected vector is then a conserved quantity only provided we constrain the Hamiltonian with
\begin{equation}
c_{0,2,0}=-2 \left(c_{1,1,0}+2 c_{2,0,0}\right)\,.
\end{equation}
Using the general generating function~\eqref{eq: GenGen} we can then obtain the transformations needed to produce the above Hamiltonian from the Kepler Hamiltonian. These transformations are defined by 
\begin{equation}
\begin{aligned}
    \delta_1&= -4 \left(c_{1,1,0}+3 c_{2,0,0}\right)\,, &a_{1,0,0}&= c_{1,1,0}+4 c_{2,0,0}\,,\\ 
    a_{0,1,0}&= -2 \left(c_{1,1,0}+4 c_{2,0,0}\right)\,, &a_{0,0,1}&= 0\,.
\end{aligned}
\end{equation}

Following the same procedure at second order, the equation that needs to be satisfied is
\begin{equation}
       \{H,A^i\}=\{ H_0, \epsilon^2A^i_2\}+\{ \epsilon H_1, \epsilon A^i_1\}+\{ \epsilon^2H_2,A^i_0\} = 0\,.
\end{equation}
Now there are $7$ coefficients for the Hamiltonian and $16$ for the LRL. This leads to $15$ constraints on the coefficients of the LRL vector, given in the ancillary files~\cite{TODO}. Once more, the remaining degree of freedom is proportional to a vector trivially commuting with $H_0$, that is $A^i_0H_0^2$, and we can set the corresponding coefficient to $0$. The transformation then yields a conserved quantity provided we impose the two constraints:
\begin{equation}
\begin{aligned}
c_{0,2,1}&= -2 c_{1,1,0}^2-16 c_{2,0,0} c_{1,1,0}-32 c_{2,0,0}^2-3 c_{0,3,0}-c_{1,1,1}-5 c_{1,2,0}-8 c_{2,1,0}-12 c_{3,0,0}\,,\\
c_{0,1,2}&= 2 c_{1,1,0}^2+16 c_{2,0,0} c_{1,1,0}+32 c_{2,0,0}^2+c_{0,3,0}-c_{1,1,1}+c_{1,2,0}-4 c_{3,0,0}\,,
\end{aligned}
\end{equation}
and thus five free parameters at this order remain in the Hamiltonian. To relate this to Kepler, we need the transformations given by
\begin{equation}
    \begin{aligned}
    \delta_2&= -4 \left(-c_{1,1,0}^2-8 c_{2,0,0} c_{1,1,0}-16 c_{2,0,0}^2+2 c_{0,3,0}+2 c_{1,2,0}+2 c_{2,1,0}+2 c_{3,0,0}\right)\,,\\
    a_{2,0,0}&= \frac{1}{2} \left(3 c_{1,1,0}^2+16 c_{2,0,0} c_{1,1,0}+16 c_{2,0,0}^2+2 c_{0,3,0}+2 c_{1,2,0}+2 c_{2,1,0}+4 c_{3,0,0}\right)\,,\\
    a_{1,1,0}&= -7 c_{1,1,0}^2-40 c_{2,0,0} c_{1,1,0}-48 c_{2,0,0}^2-5 c_{0,3,0}-5 c_{1,2,0}-4 c_{2,1,0}-4 c_{3,0,0}\,,\\
    a_{0,2,0}&= 8 c_{1,1,0}^2+48 c_{2,0,0} c_{1,1,0}+64 c_{2,0,0}^2+7 c_{0,3,0}+8 c_{1,2,0}+8 c_{2,1,0}+8 c_{3,0,0}\,,\\
    a_{0,1,1}&= \frac{1}{3} \left(-2 c_{1,1,0}^2-16 c_{2,0,0} c_{1,1,0}-32 c_{2,0,0}^2-c_{0,3,0}+c_{1,1,1}-c_{1,2,0}+4 c_{3,0,0}\right)\,.\\
    \end{aligned}
\end{equation}

Similar calculations confirm up to and including fifth PN order that general Hamiltonians of the type described above conserving a LRL vector are related to Kepler via a canonical transformation and time reparametrization. The ancillary files \cite{TODO} include a \textsf{Mathematica} notebook with the higher order computations and the resulting Hamiltonians and transformations.    
\end{proof}
In other words, this theorem indicates that there are no free lunches: only systems that are canonically conjugate to Kepler up to time reparametrization have the same extension of the spatial rotation {algebra so(3) with hidden symmetries to so(4)}.

\section{Dilaton-Coupled Einstein-Maxwell Theory} \label{EMD sec}

This section is included as a physics-oriented intermezzo, presenting some background on the relevant physical system to be discussed in Section~\ref{Two-body}. It may be skipped without much harm to the understanding of our main conclusions.

We will focus on the Einstein-Maxwell-dilaton (EMD) theory; a generalisation of general relativity that is of interest due to its general nature of forces, comprising spin-0,1,2 background fields, as well as its ability to circumvent the ``no-hair theorem". This states that a black hole cannot be described by properties other than its mass, charge, and angular momentum. EMD escapes this prohibition by introducing a non-trivial scalar field and charge\footnote{This scalar field gives the black hole what is called \emph{secondary} hair, as the scalar charge is completely determined in terms of mass and charge within a given theory, i.e. for a given value of the scalar coupling constant~\cite{Coleman:1991ku}.}. As we will see, these features will lead to interesting aspects in terms of black hole orbits. In this section, we will describe the different black hole solutions as well as their source terms.

\subsection{Black holes with dilaton hair}

The fields present in EMD theory are the metric $g_{\mu\nu}$, the four-potential $A_\mu$ and the dilaton field $\phi$, exponentially coupled (through coupling constant $a$) to the electromagnetic field strength. Mathematically, these are respectively a pseudo-metric tensor, a covector and a function on the space-time manifold $\R^4$. A solution of the corresponding Einstein-Maxwell's equation is then given by the critical tensors of the action (see also e.g.~\cite{Holzhey, Cornish})
\begin{equation}
  S[g_{\mu\nu},A_\mu,\phi]=\frac{1}{16\pi}\int \ \dif^4 x \sqrt{-g}\left(R-2(\partial\phi)^2-e^{-2a\phi}F^2\right) \,,
\end{equation}
where $g = \det(g_{\mu\nu})$ is the determinant of the pesudometric-tensor matrix, $R$ is the Ricci scalar and $F = dA$ is Maxwell's field.
In addition to diffeomorphism invariance and gauge symmetry, this action has a global symmetry that shifts the dilaton while rescaling the gauge vector.
In what follows we will omit further mathematical details and focus on a more physical description.

Starting with the special case $a=0$, the static and spherically symmetric solutions of this theory are given by the well-known Schwarzschild and Reissner-Nordstr\"{o}m black holes, with possibly non-vanishing electric charge. In the electrically neutral case, the introduction of the dilaton does not introduce additional solutions; scalar-gravity is known to satisfy the no-hair theorem and hence cannot carry scalar charge~\cite{Chrusciel:2012jk}. In contrast, when introducing the dilaton (i.e.~$a \neq 0$) in the charged case, the solution becomes more interesting and reads~\cite{Maeda, Horowitz}
\begin{align}
    \dif s^2 & = -\lambda^2\dif t^2+\lambda^{-2}\dif r^2+ r^2\kappa^2 \dif \Omega^2 \,, \qquad
    F_{tr}=\frac{e^{2a\phi_0}Q}{r^2\kappa^2} \,, \qquad
    e^{2a\phi} =e^{2a\phi_0}\left(1-\frac{r_-}{r}\right)^{\frac{2a^2}{1+a^2}} \,,
\end{align}
where 
\begin{align}
\kappa^2 = \left(1-\frac{r_-}{r}\right)^{\frac{2a^2}{1+a^2}} \,, \qquad
    \lambda^2 =\left(1-\frac{r_+}{r}\right)\left(1-\frac{r_-}{r}\right)^{\frac{1-a^2}{1+a^2}} \,.
\end{align}
Note that one can set $\phi_0=0$ by the shift symmetry of the dilaton, which we will subsequently do. This most general solution is parametrised by the locations of the inner and outer horizons $r_\pm$. These are related to the mass and charge of the object by
\begin{align} 
    r_+ & =M+\sqrt{M^2+Q^2(a^2-1)} \,, \qquad
    r_- =\left(\frac{a^2+1}{a^2-1}\right)\left(-M+\sqrt{M^2+Q^2(a^2-1)}\right)\,, \label{eq:mass}
\end{align}
Importantly, the `horizons' labelled by the minus sign are singular for all $a>0$ (i.e., the scalar curvature diverges at this point), whereas the ones labelled by the plus signs are not.

As mentioned above, this solution carries scalar charge, given by a simple integration over a spherical shell surrounding it~\cite{Horowitz}:
\begin{align}
    D&=\lim_{\rho\to\infty}\frac{1}{4\pi}\oint \ \nabla^\mu \phi\ \dif^2 \sigma_\mu 
        =\frac{a}{{a^2-1}} \left(-M + \sqrt{M^2 + Q^2 (a^2-1)}\right)\,. \label{dilaton-charge}
\end{align}
In order to have non-vanishing dilaton charge, one therefore needs both electric charge $Q \neq 0$ as well as non-vanishing scalar coupling $a \neq 0$. For a given theory and hence value of $a$, the mass and charge determine the dilaton charge, which is therefore not an independent parameter.

For completeness, we would like to mention that for the same set of charges ($M,Q,D)$, a second solution exists, given by the above fields but with parameters
\begin{align}
    \tilde{r}_+&=M-\sqrt{M^2+Q^2(a^2-1)}\,, \qquad
    \tilde{r}_-=\left(\frac{a^2+1}{a^2-1}\right)\left(-M-\sqrt{M^2+Q^2(a^2-1)}\right) \,,
\end{align}
where we have added a tilde to avoid confusion with the solutions that form our main interest. In the neutral case, these solutions are the Janis-Newman-Winicour solution for Einstein minimally coupled to a scalar field~\cite{Luna:2016hge}. Note that they are in general different from Schwarzschild (when choosing $a \neq 0$); however, in this case the solution develops a naked singularity (that is, a non-removable singularity not cloaked by an event horizon). This amounts to the statement that scalar-gravity does not have any black hole solutions other than Schwarzschild. The introduction of the electric charge does not qualitatively change this singular property. For these reasons we will not consider this solution any further.

%%-----------------------------------------------------------

\subsection{The extremal case}

We now turn to the extremal case of the hairy black hole solutions~\eqref{eq:mass}. To this end, it is convenient to rewrite the relation~\eqref{dilaton-charge} between the three charges as the quadratic relation
 \begin{align}
     (D - a M)^2 = a^2 (M^2 + D^2 - Q^2)  \,. \label{charge-relation}
    \end{align}
The importance of the expression on the right-hand side  lies in the extremality of the black hole. Imagine two such black holes; when this combination vanishes, attractive spin-0,2 forces between two such black holes (proportional to $M^2 + D^2$) would exactly cancel the repulsive spin-1 force (proportional to $Q^2$). 

The dimensionless parameter
 \begin{align}
  \chi^2 \equiv \frac{M^2 + D^2 - Q^2}{M^2} \,,
 \end{align}
is therefore a measure of extremality, and interpolates between $0$ and $1$. When $\chi = 1$, this corresponds to a neutral black hole (i.e.~the Schwarzschild solution). In contrast, the case $\chi = 0$ corresponds to an extremal black hole: in this case, the two sides of~\eqref{charge-relation} vanish separately, and the black hole has extremal charges
 \begin{align}
    D_{\rm extr} = a M \,, \qquad Q_{\rm extr} = \pm \sqrt{1+a^2} M \,, \label{extremal-charge}
 \end{align}
that are both linearly proportional to the mass. For all values $a\neq 0$, the solutions will be singular in the extremal limit \cite{Ortin}. Moreover, the thermodynamics of such extremal objects are fundamentally different for $a\gtrless 1$ -- in fact, it has been argued~\cite{Holzhey} that they resemble elementary particles more than black holes for $a>1$. As this will be of no consequence for the dynamics, which is our concern here, we will still refer to these objects as black holes.

Due the cancellation of forces between extremal black holes, one can also construct multi-center solutions. For the Einstein-Maxwell case, these are the Majumdar-Papapetrou solutions \cite{Maj47,Pap45}, while the solutions with a non-minimally coupled dilaton field added in have been discussed in~\cite{Cornish}. The line element in this case is given by
\begin{equation} \label{eq: CGmetric}
    \dif s^2 = - U^{-2/(1+a^2)}\dif t^2 +U^{2/(1+a^2)} \dif q\cdot\dif q\,,
\end{equation}
with 
\begin{equation}
   U(q) = 1+(1+a^2)\sum_{n}\frac{M_n}{\abs{q-q_n}}\,,
\end{equation}
where the sum is over the extremal black holes with mass $M_n$ and positions $q_n$ of which there may be arbitrarily many. The no-force condition implies that all centers carry electric and dilaton charges~\eqref{extremal-charge} that are proportional to their masses. For a single charge, this solution corresponds to the extremal case of the general Einstein-Maxwell-dilaton metric. This can be seen by noting the two horizons of the EMD merge into $r_\pm=M(1+a^2)$ and switching to the isotropic radius $\rho=r(1-\frac{r_\pm}{r})$.

%%-----------------------------------------------------------

\subsection{Skeletonisation} \label{skeletonisation}

To make our dynamical systems pertain to dilaton-charged black holes, simply taking a point particle with a mass and electric charge while keeping the universal dilaton coupling $a$ nonzero does not suffice. For self gravitating objects, even in the zero size limit, there will be a dependence on the background scalar field of the way the object couples to it. One can see this by considering the black hole presented in the first subsection: both the dilaton charge and electric charge depend on the background scalar field, while the electric charge is conserved by a $U(1)$ symmetry.  

In general then, one can describe a particle by its conserved charge $Q_p$ and a mass function $\mathfrak{m}(\phi)$, absorbing the dependence on the dilaton field. It will show up in the Lagrangian describing the dynamics of the point particle, reading
\begin{equation} \label{eq: PP}
    L_{pp}=\mathfrak{m}(\phi)\sqrt{-g_{\mu\nu}\dot{x}^\mu\dot{x}^\nu}+Q_pA_\mu\dot{x}^\mu.
\end{equation}
We can compare the field generated by a particle in this parametrisation to the field we know belongs to a certain object, in order to find the mass function belonging to the zero size limit of the particular object. Taking a black hole as example, this leads to the matching condition~\cite{Julie, Khalil2018}
\begin{equation} \label{eq: matching}
    \frac{\dif \mathfrak{m}(\phi)}{\dif \phi} = \frac{a}{{a^2-1}} \left(- \mathfrak{m}(\phi) + \sqrt{\mathfrak{m}(\phi)^2 + Q_p^2 e^{2 a \phi} (a^2-1)}\right) \,.
\end{equation}
For every value of the dilaton coupling $a$, the solution to this equation will depend on the charge and an integration constant, determined by the mass $m$ and charge. 

Note that the above ODE is fully analogous to the expression for the dilaton charge~\eqref{dilaton-charge}, with the identifications  
 \begin{align}
   (M,D,Q) \simeq (\mathfrak{m}(\phi) , \frac{d \mathfrak{m}(\phi)}{d \phi} , Q_p e^{a\phi} ) \,.
 \end{align}
Indeed, one should think of the latter as the background-dependent charges, which go to their asymptotic values for $\phi \rightarrow 0$. The mass function therefore determines more than only the masses. Its first derivative corresponds to the dilaton charge. Moreover, its second derivative is closely related to the extremality combination:
 \begin{align}
   \frac{d^2}{d \phi^2} \log \mathfrak{m}(\phi) =  \frac{a^2 Q_p^2 e^{2a\phi}}{\mathfrak{m}^2(\phi)} \frac{(a-\frac{\dif }{\dif \phi})\mathfrak{m}(\phi)}{a\mathfrak{m}(\phi)+(a^2-1)\frac{\dif \mathfrak{m}(\phi)}{\dif \phi}}  \,,
 \end{align}
which is evaluated on the background to be
\begin{equation}
    \beta:=\frac{d^2}{d \phi^2} \log \mathfrak{m}(\phi)\vert_{\phi=0}=\frac{a^2Q^2}{m^2}\frac{\chi}{\chi+\frac{aD}{m}}\,,
\end{equation}
clearly vanishing for extremal black holes. 

In the extremal case, therefore, the coupling of the particle to the dilaton field is simply through an exponential $\mathfrak{m}(\phi)=m e^{a\phi}$. Looking at the Lagrangian~\eqref{eq: PP}, this shows the extremal particle couples like a particle without self-gravitation to the metric and dilaton field. In retrospect this is not surprising, since, if the extremal particle does not experience a net force from other extremal particles stationary with respect to it, why would it experience any force generated by itself? Equivalently, the extremal particle can be seen to couple to a metric given by
\begin{equation}
    \tilde{g}_{\mu\nu}=e^{2a\phi}g_{\mu\nu}\,.
\end{equation}
If we make the transformation to the tilde metric, we switch from the Einstein frame to the Jordan frame, in which the bulk action takes the form~\cite{Flanagan2004,Khalil2018}
\begin{equation}
    S_{\text{Jordan}}=\int \ \dif^4 \mathbf{x} \sqrt{-\tilde{g}}\ e^{-2a\phi}\left(\tilde{R}+\left(6a^2-2\right)(\partial\phi)^2-F^2\right)\,.
\end{equation}
In this frame, the extremal particle does not couple to the dilaton field at all, making its mass constant 
\begin{remark}
In general, for non-extremal cases,~\eqref{eq: matching} has no simple closed-form expression. An exception is the case $a=1$, for which it is solved by $\mathfrak{m}(\phi)^2 = \mu^2+\frac{1}{2}Q_p^2e^{2\phi}$, where the integration constant $\mu$ is given by $\mu^2=m^2-\frac{Q_p^2}{2}$, showing it is a measure of deviation from extremality, since for $a=1$ the particle is extremal when $Q_p^2=2m^2$ (setting the background field to zero). 
\end{remark}

\section{The Two-Body System of Extremal Black Holes} \label{Two-body}
Following the physics intermezzo, we now return to the main theme of this paper -- the analysis and understanding of Hamiltonians with Kepler-like dynamics -- and employ the dynamics of black holes in Einstein-Maxwell-dilaton gravity as an example. We will consider a pair of non-spinning black holes in EMD theory, carrying both electric and dilatonic charge besides their mass. 

In the first part of this section, we restrict ourselves to the first order in the post-Newtonian expansion, i.e.~at 1PN. As we will see, for a specific case of the dilaton coupling $a$ and extremal charges, this system coincides with a Kepler-like system. In the second part, we show how the same equivalence to Kepler dynamics arises in a different region in parameter space: instead of 1PN for arbitrary mass ratio, we now focus on the test-mass limit with a vanishing mass ratio, or $m_1\ll m_2$. This corresponds to the motion of a charged particle in a given background as outlined in Section~\ref{EMD sec}, and can be studied at all orders in the post-Newtonian expansion. Prompted by the two-body discussion, we will focus specifically on extremal black holes with opposite charges.

%%----------------------------------- ------------------------

\subsection{Kepler dynamics at 1PN}

For the two-body system with arbitrary masses $m_{1,2}$, electric charges $Q_{1,2}$ and dilaton charges $D_{1,2}$ (subject to the relation~\eqref{dilaton-charge}), the 0PN Hamiltonian in center-of-mass coordinates reads
 \begin{align}
        H_{0PN} = & \frac{p^2}{2\mu}-\frac{G_{12}M\mu}{r} \,,
 \end{align}
where the effective Newton's constant is given by the interplay between attractive and repulsive forces,
\begin{align}
    G_{12}&=\frac{1}{m_1 m_2} (m_1 m_2 + D_1 D_2 - Q_1 Q_2) \,.
\end{align}
Moreover, we introduce the total mass, reduced mass and symmetric mass ratio given by
\begin{equation}
    M=m_1+m_2\,,\qquad \mu=\frac{m_1m_2}{M}\,,\qquad \nu=\frac{\mu}{M} \,,
\end{equation}
in the usual way.

The 1PN Hamiltonian can be found in e.g.~\cite{Khalil2018} and can be written in terms of three terms\footnote{Note that this in general will also have an additional $p_r^2 /r$ term, proportional to the radial momentum only. By means of a constant shift of the radial coordinate, one can set the coefficient of this term to zero, see e.g.~\cite{Bjerrum-Bohr:2002gqz}. We will do so in order to facilitate the comparison to Section~\ref{Kepler-type hams}.}
\begin{equation}
    H_{1\text{PN}}=h_1\frac{p^4}{4\mu^3}+h_2\frac{\gamma}{\mu^2}\frac{p^2}{r}+h_3\frac{\gamma^2}{\mu r^2}\,,
\end{equation}
writing $\gamma=G_{12}M\mu$ and with dimensionless coefficients given by 
\begin{equation}
    \begin{aligned}
            h_1&=-\frac{1}{2}(1-3\nu) \,, \quad h_2=-\frac{1}{2}\left(\frac{3-D_1D_2}{G_{12}}\right)-\nu\,,\\ h_3&=\frac{\nu}{2}+\frac{1}{2G^2_{12}}
            \left[(1+D_1D_2)^2-2Q_1Q_2
        +\left\{\frac{m_1}{M}(D_1^2\beta_2 +Q_1^2(1+aD_2)-2Q_1Q_2 aD_1)
        +(1\leftrightarrow 2)\right\}\right]\,.
    \end{aligned}
\end{equation}
All quantities here are asymptotic values, as measured far away from any dilaton charge. Moreover, note that we introduce a slight abuse in notation in the above and hereafter to switch to charges and dilaton charges per unit mass, as in $\tilde{Q}_{1,2}=Q_{1,2}/m_{1,2}$ but dropped the tilde to avoid cumbersome expressions.

A comparison to the 1PN Kepler-type Hamiltonians discussed in Section~\ref{Kepler-type hams} demonstrates that these have two free parameters at every order (including 1PN), while the two-body system here has three terms. For general values, this system will therefore not be related to Kepler via a symplectic transformation. More precisely, the linear combination\footnote{This corresponds to the combination $A+2B+C+D$ in the conventions of~\cite{nabet2014leading, Caron-Huot:2018ape}.}
\begin{align}
        \Delta = & h_1+2h_2+h_3 \,, \notag \\
         = & -\frac{1}{2 G_{12}^2}\bigg(6(1-Q_1Q_2)+Q_1^2Q_2^2+2D_1D_2(2-D_1D_2) \notag\\
     & +\left\{\frac{m_1}{M}\left(-D_1^2\beta_2-Q_1^2(1+aD_2)+2Q_1Q_2aD_1\bigg)+(1 \leftrightarrow 2)\right\} \right)\,, 
\end{align}
quantifies the deviation away from Kepler-like dynamics:
 \begin{itemize}
     \item 
When $\Delta$ vanishes, the Hamiltonian can be written in the form~\eqref{eq: implicit} (up to 1PN order), identifying
\begin{equation}
    f_1=-  h_1,\qquad g_1=-2(h_1+h_2)\,.
\end{equation}
In order to see this explicitly, one needs to set $\mu=1$ and scale the quantity $GM$ with $G$ Newton's constant to $\frac{1}{8}$\footnote{This is because the effective Newton's constant $G_{12}$ is eight times larger than the usual gravitational constant. This corresponds to the findings of~\cite{Caron-Huot:2018ape}, who also found this in their supergravity system.}. Hence there exists a canonical transformation to Kepler and the system has a LRL vector. The form of both the canonical transformation and the conserved charge follow from the discussion in Section~\ref{Kepler-type hams}. 
 \item 
In contrast, when $\Delta$ is non-vanishing, the relativistic corrections of this system are not of the Kepler-like form and the corresponding dynamical system differs from Kepler. 
\end{itemize}
The same quantity also determines whether or not bound states have closed orbits: in general they will not, with a perihelion precession given by\footnote{This result has been derived before in~\cite{Kan:2011aau}, though with a different mass function in the sense of Section~\ref{skeletonisation}, such that the results only coincide for extremal black holes.}
\begin{equation} \label{eq: 2bperi}
\begin{aligned}
    \delta \phi_{\rm 1PN, EMD} &= -\frac{2\pi\gamma^2}{ L^2} \Delta \,,
\end{aligned}
\end{equation}
as also stressed by~\cite{Caron-Huot:2018ape}. As a consistency check, let us point out that the GR limit, where all parameters except $m_1, m_2$ and $L$ vanish, reduces to 
\begin{equation}
    \delta \phi_{\rm 1PN, GR} = 6\pi\frac{M^2\mu^2}{L^2}\,,
\end{equation}
as already found by Einstein. Also, the perihelion in Einstein-Maxwell theory, so with dilaton vanishing, becomes
\begin{equation}
    \delta \phi_{\rm 1PN, EM} =\pi\frac{M^2\mu^2}{L^2}\left(6(1-Q_1Q_2)+Q_1^2Q_2^2-\frac{\left(m_1Q_1^2+m_2Q_2^2\right)}{M}\right)\,,
\end{equation}
which in the limit that one mass is much larger than the other agrees with~\cite{Balakin_2000}.

At this point it might seem that the introduction of the dilaton complicates the expression for the deviation from Kepler enormously. However, there is a massive simplification in the case where the charges are extremal, whose special nature was also highlighted in Section~\ref{EMD sec}. In the present case of a two-body system, we will have to take both charges extremal and of opposite sign, see~\eqref{extremal-charge} -- when taking the same extremal sign the static forces cancel out and the effective Newton's constant $G_{12}$ vanishes. Instead, when taking opposite signs, all forces are attractive and hence add up in the 0PN Hamiltonian. Furthermore, in the 1PN Hamiltonian, the parameters $\beta_{1,2}$ vanish entirely, leading to the simple result
\begin{equation} 
    \delta \phi_{\rm 1PN, EMD}|_{\rm ext.}  = \pi\frac{4(1+a^2)M^2\mu^2}{L^2}(3-a^2)\,.
\end{equation}
We therefore find that at $a^2=3$, this relativistic system of extremal black holes becomes equivalent to Kepler\footnote{This value coincides with the Kaluza-Klein reduction of gravity in $5$ dimensions~\cite{Cornish}.}. It has a LRL vector and therefore $SO(4)$ hidden symmetry. Moreover, the orbit closes as the perihelion precession vanishes.

This result is closely related to the findings for extremal black holes in maximal supergravity ~\cite{Caron-Huot:2018ape}. The role of the $SU(8)$ charge vector misalignment in maximal supergravity, needed in order to create a nonzero force between the extremal objects other than velocity dependent forces, is played in our case by the opposite nature of the charges\footnote{One could further extend our considerations and include magnetic charges as well. We expect the dyonic charges to span a $U(1)$ charge vector playing a completely analogous role to the $SU(8)$ charge vector of~\cite{Caron-Huot:2018ape}.}. In contrast to the rigid nature of maximal supergravity, enforced by the $N=8$ supersymmetry, we have the freedom to tune the dilaton coupling, finding that the two-body systems of extremal and anti-extremal black holes always are a special case with a particularly simple expression for $\Delta$, but that this only corresponds to Kepler-dynamics for a particular dilaton coupling.

%%-----------------------------------------------------------

\subsection{Kepler dynamics in the test-mass limit} 

Above, we have shown that the dynamics of the first relativistic correction of a system with comparably-sized masses in EMD theory behaves just like the classical Kepler problem. Now, we wish to extend our analysis to higher orders and will consider another tractable limit: that of the test-mass limit ($m_1\ll m_2$). Again, we can show the equivalence of this system with opposite and extremal charges in EMD with $a=\sqrt{3}$ to a Kepler-like system. However, in this system we can include all relativistic corrections. 

We will focus immediately on the case with extremal charges. The general (scalar) charged black hole metric simplifies significantly in the extremal limit, and it will be convenient to use the Majumdar-Papapetrou solution in the isotropic coordinate system~\eqref{eq: CGmetric}~\cite{Cornish}
\begin{equation}
    \dif s^2 = - U^{-2/(1+a^2)} \dif t^2 +U^{2/(1+a^2)} (\dif r^2 + r^2 \dif \theta^2)\,,
\end{equation}
with a single center: 
\begin{equation}
    U(r, \theta) = 1+(1+a^2)\frac{m_2}{r}\,.
\end{equation}
In the extremal case, the scalar field and vector are given by\footnote{In terms of the Schwarzschild radial coordinate, this choice of gauge corresponds to $A_0=-\frac{1}{\sqrt{1+a^2}}+\frac{m_2Q_2}{r}$. After the change $r\to r+r_\pm$, we find the above.}
\begin{equation}
    e^{a\phi}=U^{-a^2/(1+a^2)} \,, \qquad Q_1A_0 = U^{-1} \,,
\end{equation}
where we have chosen static gauge for the latter.

The Lagrangian for a point particle with charge $Q_1$ reads 
\begin{equation} \label{eq: EMDpp}
    L_{pp}=m_1e^{a\phi}\sqrt{-\dot{x}_\mu\dot{x}^\mu}+m_1Q_1A_\mu\dot{x}^\mu\,.
\end{equation}
The above is an extended Lagrangian, where the time coordinate can be seen as another dimension in the space; the Lagrangian is defined on the tangent space $T\Bar{M}$ of a $d+1$ dimensional manifold $\Bar{M}=\mathbb{R}\cross M$, the extended configuration manifold. All coordinates and velocities (denoted as a dot) are parametrised by a time-like variable $s$. Writing the Lagrangian in terms of the harmonic function we have
\begin{align}
    L_{pp}&=m_1U^{-1}\left( \sqrt{1-U^{4/(1+a^2)} \abs{\der[q]{t}}^2} +1 \right)\dot{t}\,.
\end{align}
%Since $t$ is a cyclic variable, $\dot{t}$ is conserved and we can identify the time $t$ with the fictitious time $s$. 
Note that solutions to the Euler-Lagrange equations following from this action will not be unique, as different choices of time parametrisation will correspond to the same physical solution. This fact allows one to reduce the Hamiltonian of the system to an autonomous Hamiltonian on $T^*M$ instead, which is different from the one related by Legendre transform to the Lagrangian above~\cite{Marsden_2001}, being it defined on $T^*\Bar{M}$. This, in turns, leads directly to the relation to the classical Kepler problem. Since the theory was written down parametrization invariant, we can choose any monotonic function as time parameter~\cite{TongGR}. We will take the simple time parametrisation $\dot{t}=-1$.
The Legendre transform then results in
\begin{equation}
    H(q,\der[q]{t})
        =\frac{\partial L_{pp}}{\partial \der[q]{t}}\cdot \der[q]{t} - L_{pp}
        =m_1U^{-1}\left( \frac{1}{\sqrt{1-U^{4/(1+a^2)} \abs{\der[q]{t}}^2}} +1 \right) \,,
\end{equation}
where $t=x^0$ is the time of the particle seen from the rest frame of the central mass and $q=(x^1,x^2,x^3)$ the position. 
Solving for the momenta conjugate to the positions,
\begin{equation}
    p_i=m_1\frac{\partial L_{pp}}{\partial \der[q^i]{t}}=\frac{U^{(3-a^2)/(1+a^2)} \der[q_i]{t} }{\sqrt{1-U^{4/(1+a^2)} \abs{\der[q]{t}}^2}}\,, 
\end{equation}
then leads to the Hamiltonian in phase space
\begin{equation}
    H(q,p)=m_1 U^{-1} \left( \sqrt{ 1 + U^{2(a^2-1)/(1+a^2)} \frac{\abs{p}^2}{m_1^2}} +1 \right) \,.
\end{equation}
Note that the rest-mass energy is equal to $2m_1 c^2$; this differs from the usual $m_1 c^2$ due to the specific gauge choice that we have made for the gauge vector.

There is a number of interesting subcases to consider. First of all, 
the case $a=1$ leads to a Hamiltonian that is conformal to the special relativistic case,
 \begin{align}
     H(q,p)=m_1U^{-1}(q) \left( \sqrt{ 1 +  \frac{{p}^2}{m_1^2}} + 1 \right) \,.
 \end{align}
Instead, our main interest will be the case $a^2=3$ again. In this case we have
\begin{equation}
H(q,{p})=m_1U^{-1}(q) \left( \sqrt{ 1 + U(q) \frac{{p}^2}{m_1^2}} +1 \right) \,.
\end{equation}
Remarkably, this Hamiltonian satisfies the interesting relation 
\begin{equation} \label{eq: special EMD}
    \frac{1}{2}\left(\frac{H^2(q,p)}{m_1}-2  H(q,p)\right)=\frac{p^2}{2m_1}- \frac{2m_2 H^2(q,p)}{m_1r(q)}\,,
\end{equation}
where $r(q)=\abs{q}$. Shifting the Hamiltonian by the rest-mass energy and rescaling the distance by a factor $8$, one obtains (in terms of the new Hamiltonian)
\begin{equation} \label{eq: specEMD2}
    H(q,p)+\frac{1}{2}\frac{H^2(q,p)}{ m_1} =\frac{p^2}{2m_1}- m_2\frac{m_1+H(q,p)+ \frac{1}{4m_1}H^2(q,p)}{r(q)}\,.
\end{equation}
This specific form of the Hamiltonian shows that, following the arguments of Section~\ref{Kepler-type hams}, the extremal EMD 1-centre system with $a=\sqrt{3}$ is equivalent to the classical Kepler problem. % Only after a shift H=H+2m_1 and rescaling m_2=8m_2
It therefore also has a hidden LRL symmetry as well as closed orbits.

\bigskip

The same special behaviour can also be seen from the perspective of the equations of motion. Adopting the parametrisation $\dot{x}_\mu\dot{x}^\mu=-1$, there are two conserved quantities from the Lagrangian~\eqref{eq: EMDpp}
\begin{align}
    L=m_1U^{(2-a^2)/(1+a^2)}r^2\dot{\theta}\,, \qquad  E=m_1U^{(-2-a^2)/(1+a^2)}\dot{t}+m_1Q_1A_0\,,
\end{align}
as angular momentum and energy. Using again $\dot{x}^2=-1$ we can state 
\begin{equation} \label{eq: velocity norm}
     -  U^2\left(\frac{E}{m_1}-U^{-1}\right)^2 +U^{2/(1+a^2)}\dot{r}^2 + \frac{L^2}{m_1^2r^2}U^{(2a^2-2)/(1+a^2)}=-1\,.
\end{equation}
It is useful to now take the Binet variable $u\equiv\frac{1}{r}$, with $u'$ as its derivative with respect to $\theta$ so that 
\begin{equation}
    \dot{r}=-u'\frac{L}{m_1}U^{(a^2-2)/(1+a^2)}\,,
\end{equation}
and we find for the equation of motion
\begin{equation}\label{eq: cancel}
        (u')^2 + u^2-  U^{4/(1+a^2)}\frac{1}{L^2}(E^2-2Em_1U^{-1})= 0\,.
\end{equation}
The last term here in principle provides an infinite expansion in increasing orders of $u$ (and its accompanying powers of $\frac{1}{c^2}$). However, if we now choose $a^2=3$, the powers of the harmonic function simplify and (restoring the gravitational constant) we have
\begin{equation}
    (u')^2 + \left(u-2\frac{Gm_2E^2}{L^2}\right)^2=  \frac{\left(E^2-2m_1E\right)}{L^2}+\frac{4G^2m_2^2E^4}{L^4}\,.
\end{equation}
Compare this to the classical equation of motion (see e.g.~\cite{TongGR})
\begin{equation}
    (u')^2 + \left(u-\frac{Gm_2 m_1^2}{L^2}\right)^2=  \frac{2E_N m_1}{L^2}+\frac{G^2m_2^2m_1^4}{L^4}\,,
\end{equation}
where $E_N$ is the Newtonian energy. We see the only difference resides in the modification of the gravitational constant by a function $g(E)=2\frac{E^2}{m_1^2}$. Accordingly, the Hamiltonian giving the Kepler-like structure in~\eqref{eq: special EMD} here coincides exactly with the role of the Newtonian energy. The orbits will therefore be the same up to the above modification of the gravitational constant.

%%-----------------------------------------------------------

\section{Conclusion}

This paper studies relativistic systems of gravitating bodies, with dynamics equivalent to the classical Kepler problem. In particular, we have shown a class of seemingly relativistic Hamiltonians to have proportional flow to the Keper Hamiltonian on a levelset and we provided the accompanying Laplace-Runge-Lenz vector. Moreover, to fifth order in the PN expansion, we were able to construct the symplectic transformations and energy redefinitions needed to transform the Kepler Hamiltonian into such Kepler-type Hamiltonians explicitly, beyond the levelset equivalence. Additionally, a conjecture was put forth that all relativistic systems of a certain kind, i.e. Kepler at zeroth order and PN corrections of the form 
\begin{equation}
    c_{n,m,l}\frac{(p^2)^n(p_r^2)^l}{r^m}\,,
\end{equation}
that conserve a (relativistic version of a) Laplace-Runge-Lenz vector are canonically conjugate up to time reparametrization to the Kepler system. This conjecture was also shown to hold at least to fifth PN order.  

Remarkably, this type of Hamiltonians is not merely a mathematical possibility, but it is actually realised in a comparatively simple and interesting physical theory. The Einstein-Maxwell-dilaton theory, when considering two extremal black holes with opposite signs of the charges and dilaton coupling tuned to the Kaluza-Klein reduction value ($a=\sqrt{3}$), has Hamiltonians of exactly this form in both the test-mass limit and the $1$PN expansion of the two-body system. 

We therefore have established an interesting link between relativistic Hamiltonians, the ordinary Kepler problem and an explicit realisation. Several directions for further exploration present themselves. Firstly, exploring the conditions for local and global existence of the implicit, Kepler-type Hamiltonians and studying the geometry of the corresponding phase space would make for an intriguing investigation. 

Secondly, as the equivalence to Kepler for the discussed Hamiltonians is only shown on a levelset, the full phase space will in general look different from the Kepler phase space. Roughly put, the constant energy surfaces are `stacked' in a different way in the Kepler-type systems as compared to the original Kepler system. This raises the question whether one can always find a symplectic transformation from one to the other, as we have shown explicitly to a limited order. While we expect the normal-form-like construction of canonical transformations to extend to higher orders, perhaps even arbitrarily high orders, there is no guarantee this procedure will converge. However, it would be very appealing, if possible, to construct the asymptotic series of the transformations. 
Extending our local relations (on or in a neighbourhood of an energy surface) to Kepler to global relations, would also address the question whether the so(4) algebra is indicative of a $SO(4)$ symmetry group.

Thirdly, in the non-relativistic Kepler problem, the geometrical origin of the $SO(4)$ symmetry of $3$-dimensional Kepler is known to stem from a mapping to the motion of a free particle on a three-sphere, as derived by Fock~\cite{Fock1935} in $1935$. In the context of the EMD system, we have a natural way of perturbing the Kepler problem, by allowing for example the dilaton coupling to deviate from $a=\sqrt{3}$. This allows one to investigate which elements of this geometric construction would survive such a perturbation in the mapping to the three-sphere. Can the motion still be described by free motion on some hypersurface?

Also related to the larger-than-expected symmetry group of the EMD $1$-centre system is the Kaluza-Klein reduction of $5$-dimensional Einstein-Hilbert gravity, yielding EMD with the special dilaton coupling. Can we understand the origin of the hidden symmetry from the higher-dimensional origin of its theory? After all, while an $SO(4)$ symmetry in $3$ dimensions might surprise the reader unfamiliar with the Kepler problem, this is simply the group of spatial rotations in 5D. It would be interesting to investigate this correspondence and possible relation further.

Closely connected to the latter point is the more involved theory of $\mathcal{N}=8$ supergravity, which can be obtained as the dimensional reduction of supergravity from 11 to 4 dimensions; many of our EMD findings were already highlighted in this setting from the perspective of vanishing periastrion precession~\cite{Caron-Huot:2018ape}. Moreover, extremal black holes in the $\mathcal{N}=8$ theory have vanishing periastron precession to third post-Minkowskian order~\cite{Parra-Martinez:2020dzs}, at least leaving open the possibility of conserving a LRL vector to higher order and relating to Kepler. It is not clear that this also applies to the higher order two-body Hamiltonians of the extremal EMD with $a=\sqrt{3}$; we leave this interesting question open for future study.

\section*{Acknowledgments}
We are grateful to Andreas Knauf, Tom\'{a}s Ort\'{i}n and C\'{e}dric Deffayet for stimulating discussions and to the anonymous referees for their useful comments. D.N. is supported by the Fundamentals of the Universe research program within the University of Groningen. M.S. is supported by the NWO project 613.009.10.

\printbibliography

\end{document}